\newtheorem{thm}{Theorem}
\newtheorem{lem}[thm]{Lemma}
\newtheorem{defn}[thm]{Definition}
\newtheorem{cor}[thm]{Corollary}
\theoremstyle{definition}
\newtheorem{example}[thm]{Example}
\newcommand{\xc}{{\sc exact} \small 3\normalsize-{\sc cover}}
\newcommand{\sat}{{\sc \small (2,2)\normalsize-e\small 3\normalsize-sat}}
\def\MM{{\cal M}}
\def\PP{{\cal P}}
\begin{document}
\title{\bf Pareto optimal matchings of students to courses\\ in the presence of prerequisites\thanks{Supported by grant APVV-15-0091 from the Slovak Research and Development agency  (Cechl\'arov\'a), by the Swiss National Science Foundation SNFS (Klaus), by grant EP/K010042/1 from the Engineering and Physical Sciences Research Council (Manlove), and by a Short-Term Scientific Mission from COST Action IC1205 on Computational Social Choice (Manlove).  Part of this work was carried out whilst the authors visited Corvinus University of Budapest, and whilst the third author visited P.J. \v Saf\'arik University. This research was initiated during Dagstuhl Seminar 15241 on Computational Social Choice: Theory and Applications \cite{Dag15}.  We would like to thank the following Dagstuhl participants who were involved in initial discussions regarding the results in this paper: Haris Aziz, P\'eter Bir\'o, Jiehua Chen and Nicholas Mattei.
}}
\author{Katar\'\i na Cechl\'arov\'a$^1$, Bettina Klaus$^2$ and David F.\ Manlove$^3$\\ \\
\footnotesize\emph{$^1$Institute of Mathematics, Faculty of Science, P.J. \v Saf\'arik University,}\\
\footnotesize\emph{Jesenn\'a 5, 040 01 Ko\v sice, Slovakia.  Email {\tt katarina.cechlarova@upjs.sk}}.\\ \\
\footnotesize\emph{$^2$Faculty of Business and Economics, University of Lausanne, Internef 538,}\\
\footnotesize\emph{CH-1015 Lausanne, Switzerland.  Email {\tt Bettina.Klaus@unil.ch}.} \\ \\
\footnotesize\emph{$^3$School of Computing Science,  Sir Alwyn Williams Building, University of Glasgow,}\\
\footnotesize\emph{Glasgow, G12 8QQ, UK.  Email {\tt david.manlove@glasgow.ac.uk}.}}
\date{}
\maketitle
\begin{abstract}
We consider the problem of allocating applicants to courses, where each applicant has a subset of acceptable courses that she ranks in strict order of preference.  Each applicant and course has a \emph{capacity}, indicating the maximum number of courses and applicants they can be assigned to, respectively.  We thus essentially have a many-to-many bipartite matching problem with one-sided preferences, which has applications to the assignment of students to optional courses at a university.

We consider additive preferences and lexicographic preferences as two means of extending preferences over individual courses to preferences over bundles of courses.  We additionally focus on the case that courses have prerequisite constraints: we will mainly treat these constraints as compulsory, but we also allow alternative prerequisites.  We further study the case where courses may be corequisites.

For these extensions to the basic problem, we present the following algorithmic results, which are mainly concerned with the computation of Pareto optimal matchings (POMs).  Firstly, we consider compulsory prerequisites.  For additive preferences, we show that the problem of finding a POM is NP-hard.  On the other hand, in the case of lexicographic preferences we give a polynomial-time algorithm for finding a POM, based on the well-known sequential mechanism.  However we show that the problem of deciding whether a given matching is Pareto optimal is co-NP-complete.  We further prove that finding a maximum cardinality (Pareto optimal) matching is NP-hard.  Under alternative prerequisites, we show that finding a POM is NP-hard for either additive or lexicographic preferences.  Finally we consider corequisites.  We prove that, as in the case of compulsory prerequisites, finding a POM is NP-hard for additive preferences, though solvable in polynomial time for lexicographic preferences.  In the latter case, the problem of finding a maximum cardinality POM is NP-hard and very difficult to approximate.
\end{abstract}

\section{Introduction}\label{s_intro}
Problems involving the allocation of indivisible goods to agents have gained a lot of attention in the literature, since they model many real scenarios, including the allocation of pupils to study places \cite{BS99}, workers to positions \cite{KC82}, researchers to projects \cite{MT13}, tenants to houses \cite{AS98} and students to courses \cite{BC12}, etc.  We assume that agents on one side of the market (pupils, workers, researchers, tenants, students) have preferences over objects on the other side of the market (study places, positions, projects, courses, etc.)\ but not vice versa.  In such a setting where the desires of agents are in general conflicting, economists regard Pareto optimality (or Pareto efficiency) as a basic, fundamental criterion to be satisfied by an allocation.  This concept guarantees that no agent can be made better off without another agent becoming worse off.  A popular and very intuitive approach to finding Pareto optimal matchings is represented by the class of \emph{sequential allocation mechanisms} \cite{KC71,BK05,BL11,AWX15}.  

In the one-to-one case (each agent receives at most one object, and each object can be assigned to at most one agent) this mechanism has been given several different names in the literature, including serial dictatorship \cite{AS98,Man07}, queue allocation \cite{Sve94}, Greedy-POM \cite{ACMM04} and sequential mechanism \cite{BL11,AWX15}, etc.  Several authors independently proved that a  matching  is Pareto optimal if and only if can be obtained by the serial dictatorship mechanism (Svensson in 1994 \cite{Sve94}, Abdulkadiro\v glu and S\"onmez in 1998 \cite{AS98}, Abraham et al.\ in 2004 \cite{ACMM04}, and Brams and King in 2005 \cite{BK05}). 

In general many-to-many matching problems (agents can receive more than one object, and objects can be assigned to more than one agent), the sequential allocation mechanism works as follows: a central authority decides on an ordering of agents (often called a \emph{policy}) that can contain multiple copies of an agent (up to her capacity).
According to the chosen policy, an agent who has her turn chooses her most preferred object among those that still have a free slot.  This approach was used in \cite{AWX15,BL11}, where the properties of the obtained allocation with respect to the chosen policy and strategic issues are studied. 

The serial dictatorship mechanism is a special case of the sequential allocation mechanism where the policy contains each agent exactly once, and when agent $a$ is dealt with, she chooses her entire most-preferred bundle of objects.
The difficulty with serial dictatorship is that it can output a matching that is highly unfair.  For example, it is easy to see that if there are two agents, each of whom finds acceptable all objects and has capacity equal to the number of objects, and each object has capacity 1, then the serial dictatorship mechanism will assign all objects to the first agent specified by the policy and no object to the other agent.

In this paper we shall concentrate on one real-life application of this allocation problem that arises in education, and so our terminology will involve \emph{applicants} (students) for the agents and \emph{courses} for the objects.  In most universities students have some freedom in their choice of courses, but at the same time they are bound by the rules of the particular university.  A detailed description of the rules of the allocation process and the analysis of the behaviour of students at Harvard Business School, based on real data, is provided by Budish and Cantillon \cite{BC12}. They assume that students have a linear ordering of individual courses and their preferences over bundles of courses are responsive to these orderings.  The emphasis in \cite{BC12} was on strategic questions. The empirical results confirmed the theoretical findings that, loosely speaking, dictatorships (where students choose one at a time their entire most preferred available bundle) are the only mechanisms that are strategy-proof and ex-post Pareto efficient.

Another field experiment in course allocation is described by Diebold et al.\ \cite{DABMS14}. The authors compared the properties of allocations obtained by the sequential allocation mechanism where the policy is determined by the arrival time of students (i.e., first-come first-served) and by two modifications of the Gale-Shapley student-optimal mechanism, i.e., they assumed  that courses may also have preferences or priorities  over students.  Moreover, they only considered the case when each student can be assigned to at most one course.

In reality, a student can  attend more courses, but not all possible bundles are feasible for her.  
Cechl\'arov\'a et al.\ \cite{CEFMMP14} considered explicitly-defined notions of feasibility for bundles of courses.  For these feasibility concepts, a given bundle can be checked for feasibility for a given applicant in time polynomial in the number of courses.  Such an algorithm may check for example if no two courses in the bundle are scheduled at the same time, or if the student has enough budget to pay the fees for all the courses in the bundle, etc.  Cechl\'arov\'a et al.\ \cite{CEFMMP14}  found out that a sufficient condition for a general sequential allocation mechanism to output a Pareto optimal matching is that feasible bundles of courses form families that are closed with respect to subsets, and preferences of students over bundles are lexicographic.  They also showed that under these assumptions a converse result holds, i.e., each Pareto optimal matching can be obtained by sequential allocation mechanism if a suitable policy is chosen. 

\subsubsection*{Prerequisites and corequisites}
In this paper we deal with prerequisite and corequisite constraints. Prerequisite constraints model the situation where a student may be allowed to subscribe to a course $c$ only if she subscribes to a set $C'$ of other course(s). The courses in $C'$ are usually called \emph{prerequisite courses}, or \emph{prerequisites}, for $c$. For example, at a School of Mathematics, an Optimal Control Theory course may have as its prerequisites a course on Differential Equations as well as a course in Linear Algebra; a prerequisite for a Differential Equations course could be a Calculus course, etc.  On the other hand, corequisite constraints model the situation where a student takes course $c_1$ if and only if she takes course $c_2$.  These courses are referred to as \emph{corequisite courses}, or \emph{corequisites}.  For example, a corequisite constraint may act on two courses: one that is a theoretical programming course and the other that is a series of corresponding programming lab sessions.

We consider three different models involving prerequisite and corequisite constraints. The first model involves compulsory prerequisites, 
but we allow for the possibility that different students may have different prerequisite constraints. For example, for a doctoral study program in Economics, an economics graduate may have as a prerequisite a mathematical course and, on the other hand, a mathematics graduate may have as a prerequisite a course on microeconomics, etc.
The second model concerns \emph{alternative prerequisites}.  Here it is assumed that certain courses require that a student subscribes to at least one of a set of other courses.  For example, a course in mathematical modelling may require that a student attends one of a range of courses that deal with a specific mathematical modelling software package, such as Maple, MATLAB or Mathematica. 
Finally, the third model considers corequisites.  Here we assume that constraints on corequisite courses are identical for all applicants.

As we assume that applicants express their preferences only over individual courses, a suitable extension of these preferences to preferences over bundles of courses has to be chosen. Among the most popular extensions are \emph{responsive} preferences \cite{Rot85a}.  That is, an applicant has responsive preferences over bundles of courses if bundle $C'$ is preferred to bundle $C''$ whenever $C'$ is obtained from $C''$ by replacing some course in $C''$ by a more preferred course not contained in $C''$.
Responsiveness is a very mild requirement and responsive preferences form a very wide and variable class. Therefore we shall restrict our attention to two specific examples, namely \emph{additive} \cite{BBP04,BL11,BC12} and \emph{lexicographic} \cite{Fis75,BBP04,SV12,SS13a,TSY14} preferences.
Although lexicographic preferences can be modelled as additive preferences by choosing appropriate weights \cite{BL11}, we would like to avoid this approach as it requires very large numbers, moreover, assuming lexicographic preferences from the outset leads to more straightforward algorithms.     
		
To the best of our knowledge, matchings with prerequisite constraints have not been studied yet from an algorithmic perspective.  Some connections can be found in the literature on scheduling with precedence constraints \cite{LK78}, but, unlike in the scheduling domain, there is no common optimality criterion for all the agents, since their desires are often conflicting and all have to be taken into account.

We would however like to draw the reader's attention to the works of Guerin et al.\ \cite{GHFMG12} and Dodson et al.\ \cite{DMGG13}, who analyse a version of a course selection problem in greater depth, using probabilistic methods. Their work is a part of a larger research programme that involves advising college students about what courses to study and when, taking into account not only the required course prerequisites, but also the students' course histories and obtained grades. Based on this information, the authors try to estimate a student's ability to take multiple courses concurrently, with the goal to optimise the student's total expected utility and her chances of moving successfully toward graduation. Guerin et al.\ and  Dodson et al.\ consider also temporal factors, meaning that a student can only take a certain course in the current semester if she has completed the necessary prerequisites during previous semesters. By contrast, here we assume that students choose all their courses as well as their necessary prerequisites simultaneously, and we concentrate on computational problems connected with producing a matching that fulfils a global welfare criterion. 

\subsubsection*{Our contribution}
As mentioned above, we will formally introduce three possible models of course allocation involving prerequisites or corequisites. In the first case the prerequisites are antisymmetric and compulsory (i.e., a constraint might ensure that an applicant can subscribe to course $c_1$ only if she also subscribes to course $c_2$, but she can attend $c_2$ without attending $c_1$). For additive preferences we show that computing a Pareto optimal matching is an NP-hard problem.\footnotemark~In the case of lexicographic preferences we illustrate that the simple sequential allocation mechanism and its natural modification may output a matching that either violates prerequisites or Pareto optimality.  Therefore we stipulate that on her turn, an applicant chooses her most preferred course together with all the necessary prerequisites. Still, it is impossible to obtain each possible Pareto optimal matching in this way. It is also unlikely that an efficient algorithm will be able to produce all Pareto optimal matchings, since the problem of checking whether a given matching admits a Pareto improvement is NP-complete.
Considering structural properties of Pareto optimal matchings, it is known that finding a Pareto optimal matching with minimum cardinality is NP-hard, even in the very restricted one-to-one model (naturally without prerequisites) \cite{ACMM04}, but here we show that the problem of finding such a matching with maximum cardinality is also NP-hard.

The second model involving alternative prerequisites (i.e., where a constraint takes the form that an applicant can attend course $c_1$ only if she also attends either course $c_2$ or course $c_3$) seems to be computationally the most challenging case.  We show that although a Pareto optimal matching always exists, it cannot be computed efficiently unless P=NP, both under additive as well as lexicographic preferences of applicants.\footnotemark[\value{footnote}]\footnotetext{This follows because, as we will in fact show, the problem of finding a most-preferred feasible bundle of courses for a given applicant in this setting is NP-hard.}
 
For the third case with corequisites (i.e., an applicant can attend a course if and only if she attends all its corequisites) we propose another modification of the sequential allocation mechanism for finding Pareto optimal matchings. If the corequisites for all the applicants are the same, the model is closely related to matchings with sizes \cite{BM14} or many-to-many matchings with price-budget constraints \cite{CEFMMP14}, and we strengthen the existing results by showing that the problem of finding a maximum size Pareto optimal matching is not approximable within $N^{1-\varepsilon}$, for any $\varepsilon>0$, unless P=NP, where $N$ is the total capacity of the applicants.

The organisation of the paper is as follows. In Section \ref{s_def} we give  formal definitions of the problem models and define relevant notation and terminology. Sections \ref{s_pr}, \ref{s_alt} and  \ref{s_coreq} deal separately with the three different models, involving compulsory prerequisites, alternative prerequisites and corequisites respectively. Finally, Section \ref{s_open} concludes with some open problems and possible directions for future research.

\section{Definitions and notation}\label{s_def}
\subsection{Basic Course Allocation problem}
An instance of the \emph{Course Allocation problem} ({\sc ca}) involves a set $A=\{a_1,a_2,\dots,a_{n_1}\}$ of \emph{applicants} and a set $C=\{c_1,c_2,\dots,c_{n_2}\}$ of \emph{courses}.  Each course $c_j\in C$ has a capacity $q(c_j)$ that denotes the maximum number of applicants that can be assigned to $c_j$. Similarly each applicant $a_i\in A$ has a capacity $q(a_i)$ denoting the maximum number of courses that she can attend. The vector ${\mathbf q}$ denotes applicants' and courses' capacities. Moreover $a_i$ has a strict linear order (preference list) $P(a_i)$ over a subset of $C$.  We shall represent $a_i$'s preferences $P(a_i)$ as a simple ordered list of courses, from the most preferred to the least preferred one.  With some abuse of notation, we shall say that a course $c_j$ is \emph{acceptable} to applicant $a_i$ if $c_j\in P(a_i)$, otherwise $c_j$ is \emph{unacceptable} to $a_i$.  $\PP$ denotes the $n_1$-tuple of applicants' preferences. 
Thus altogether, the tuple $I=(A,C,{\mathbf q},\PP)$ constitutes an instance of {\sc ca}.
 
An \emph{assignment} $M$ is a subset of $A\times C$. 
The set of applicants assigned to a course $c_j\in C$ will be denoted by $M(c_j)=\{a_i\in A : (a_i,c_j)\in M\}$ and similarly, the \emph{bundle} of courses assigned to an applicant $a_i$ is $M(a_i)=\{c_j\in C : (a_i,c_j)\in M\}$. 
An assignment $M$ is a \emph{matching} if, for each $a_i\in A$, $M(a_i)\subseteq P(a_i)$ and $|M(a_i)|\leq q(a_i)$, and for each $c_j\in C$, $|M(c_j)|\le q(c_j)$.  In the presence of prerequisites and corequisites, additional \emph{feasibility} constraints are to be satisfied by a matching, which will be defined below.  An applicant $a_i\in A$ is said to be \emph{undersubscribed} (respectively \emph{full}) in a matching $M$ if $|M(a_i)|<q(a_i)$ (respectively $|M(a_i)|=q(a_i)$).  Similarly we may define \emph{undersubscribed} and \emph{full} for a course $c_j$ relative to $M$.

An applicant $a_i\in A$ has \emph{additive preferences} over bundles of courses if she has a utility $u_{a_i}(c_j)$ for each course $c_j\in C$, and she prefers a bundle of courses $C_1\subseteq C$ to another bundle $C_2\subseteq C$ if and only if $\sum_{c_j\in C_1}u_{a_i}(c_j)>\sum_{c_j\in C_2}u_{a_i}(c_j)$.

Applicant $a_i$ 
compares bundles of courses \emph{lexicographically} if, given two different bundles $C_1\subseteq C$ and $C_2\subseteq C$ she prefers $C_1$ to $C_2$ 
if and only if her most preferred course in the symmetric difference $C_1 \oplus C_2$ belongs to $C_1$.
Notice that the lexicographic ordering on bundles of courses generated by a strict preference order $P(a_i)$ is also strict. 

Applicant $a_i$ prefers matching $M'$ to matching $M$ if she prefers $M'(a_i)$ to $M(a_i)$. 
We say that a matching $M'$ \emph{(Pareto) dominates} a matching $M$ 
if  at least one applicant prefers $M'$ to $M$ and no applicant prefers $M$ to $M'$. 

A \emph{Pareto optimal matching} (or \emph{POM} for short), is a matching that is  not (Pareto) dominated by any other matching. As the dominance relation 
is a partial order over $\MM$, the set of all matchings in $I$, and $\MM$ is finite, a POM exists for each instance of {\sc ca}. 

\subsection{Compulsory prerequisites}
\label{sec:comp-prereq}
We now define the first extension of {\sc ca} involving compulsory prerequisites.  Suppose that for each applicant $a_i\in A$, there is a strict partial order $\to_{a_i}$ on the set of courses $C$ 
representing the prerequisites of applicant $a_i$.  It is easy to see that this partial order can be fully specified if for each $c_j\in C$ its \emph{immediate prerequisites} for $a_i$ are given, where $c_j'\in C$ is an immediate prerequisite of $c_j$ if $c_j\to_{a_i} c_j'$ and there is no $c_j''\in C$ such that $c_j\to_{a_i} c_j''\to_{a_i} c_j'$ (i.e., $\to_{a_i}$ is given in terms of its transitive reduction).  
We now define the \emph{feasibility} of a bundle of courses relative to the constraints on compulsory prerequisites. 
\begin{defn}
A bundle of courses $C'\subseteq C$ is \emph{feasible} for an applicant $a_i\in A$ if the following three conditions are fulfilled: 
\label{def:compulsory}
\begin{itemize}\itemsep0pt
\item[$(i)$] $C'\subseteq P(a_i)$ ;
\item[$(ii)$] $|C'|\leq q(a_i)$ ;
\item[$(iii)$] $C'$ fulfils $a_i$'s prerequisites, i.e., for each $c_j,c_k\in C$, if $c_j\in C'$ and $c_j\to_{a_i} c_k$ then $c_k\in C'$.
\end{itemize}
\end{defn}

A subset $C'$ of a partially-ordered set $C$ fulfilling condition $(iii)$ is called a \emph{down-set} (see \cite{DP90}). We shall denote by $\stackrel{\rightarrow_{a_i}}{c_j}$ the inclusion minimal down-set of $C$ (with respect to $a_i$'s prerequisites) that contains course $c_j$.

For technical reasons, we assume that $\stackrel{\rightarrow_{a_i}}{c_j}\subseteq P(a_i)$ for each $a_i\in A$  and each $c_j\in P(a_i)$. If this is not the case then we can easily modify the preference list of applicant $a_i$ either by deleting a course $c_j$ if $P(a_i)$ does not contain all the courses in $\stackrel{\rightarrow_{a_i}}{c_j}$, or we can append the missing courses to the end of $P(a_i)$.

An instance $I$ of the \emph{Course Allocation problem with (compulsory) PRequisites} ({\sc capr}) comprises a tuple $I=(A,C,{\mathbf q},\PP,\rightarrow)$, where $(A,C,{\mathbf q},\PP)$ is an instance of {\sc ca} and $\rightarrow$ is the $n_1$-tuple of prerequisite partial orders $\rightarrow_{a_i}$ for each applicant $a_i\in A$.  In an instance of {\sc capr}, a \emph{matching} $M$ is as defined in the {\sc ca} case, together with the additional property that, for each applicant $a_i\in A$, $M(a_i)$ is feasible for $a_i$.

In an instance of {\sc capr} where the prerequisites are the same for all applicants, we may drop the applicant subscript when referring to the prerequisite partial order $\rightarrow$.

\subsection{Alternative prerequisites}
The second model is a variant of {\sc capr} in which prerequisites need not be compulsory but are in general presented in the form of alternatives.  Formally, each applicant $a_i$ has a mapping $\mapsto_{a_i}: C\to 2^C$ with the following meaning: if $c_j\mapsto_{a_i} \{c_{i_1},c_{i_2},\dots,c_{i_k}\}$, it must then hold that if $a_i$ wants to attend course $c_j$ then she has to attend at least one of the courses $c_{i_1},c_{i_2},\dots,c_{i_k}$ too.  For each applicant $a_i\in A$ the relation $\mapsto_{a_i}$ must be acyclic in the sense that the graph $G_{a_i}=(V,E_{a_i})$ must be acyclic, where $V=C$ and $E_{a_i}=\{(c_j,c_k)\in C\times C : c_j\mapsto_{a_i} C'\wedge c_k\in C'\}$.

We thus define a bundle of courses $C'\subseteq C$ to be \emph{feasible} for a given applicant $a_i\in A$ if Conditions (i) and (ii) in Definition \ref{def:compulsory} are satisfied, and moreover, for any course $c_j\in C'$, if $c_j\mapsto_{a_i} \{c_{i_1},c_{i_2},\dots,c_{i_k}\}$, then $c_{i_r}\in C'$ for some $r$ ($1\leq r\leq k$).
 
An instance $I$ of the \emph{Course Allocation problem with Alternative PRequisites} ({\sc caapr}) comprises a tuple $I=(A,C,{\mathbf q},\PP,\mapsto)$, where $(A,C,{\mathbf q},\PP)$ is an instance of {\sc ca} and $\mapsto$ is the $n_1$-tuple of mappings  $\mapsto_{a_i}$ for each applicant $a_i\in A$.  In an instance of {\sc caapr}, a \emph{matching} $M$ is as defined in the {\sc ca} case, together with the additional property that, for each applicant $a_i\in A$, $M(a_i)$ is feasible for $a_i$.

\subsection{Corequisites}
\label{sec:coreq}
In the third and final model we assume that constraints on courses are given in the form of corequisites.  We assume that corequisite constraints are not applicant-specific, and hence there is a single reflexive, symmetric and transitive relation $\leftrightarrow$ on $C$ such that each applicant is allowed to subscribe to a course $c_j\in C$ only if she also subscribes to each $c_k\in C$ such that  $c_j\leftrightarrow c_k$.  Two courses $c_j,c_k\in C$ with $c_j\leftrightarrow c_k$ are said to be each other's \emph{corequisites}.  Relation $\leftrightarrow$ is an equivalence relation on $C$ and it   
effectively partitions the set of courses into equivalence classes $C^1, C^2,\dots, C^{r}$. Hence an applicant can subscribe either to all courses in one equivalence class or to none.\footnote{This corresponds to the case where each set of courses in $C^i$ can effectively be replaced by a single \emph{supercourse} taking up $|C_i|$ places of any applicant it is assigned to -- we shall exploit this correspondence in Section \ref{sec:coreq}.}

Formally, we define a bundle of courses $C'\subseteq C$ to be \emph{feasible} for a given applicant $a_i\in A$ if Conditions (i) and (ii) in Definition \ref{def:compulsory} are satisfied, and moreover, for any two courses $c_j,c_k\in C$, if $c_j\leftrightarrow c_k$ then $c_j\in C'$ if and only if $c_k\in C'$.  An instance $I$ of the \emph{Course Allocation problem with Corequisites} ({\sc cacr}) comprises a tuple $I=(A,C,{\mathbf q},\PP,\leftrightarrow)$, where $(A,C,{\mathbf q},\PP)$ is an instance of {\sc ca} and $\leftrightarrow$ is as defined above.  In an instance of {\sc cacr}, a \emph{matching} $M$ is as defined in the {\sc ca} case, together with the additional property that, for each applicant $a_i\in A$, $M(a_i)$ is feasible for $a_i$.

We remark that we do not consider corequisites as a special case of compulsory prerequisites, nor vice versa,  for in the definition compulsory prerequisites,  we stipulate that the order relation is antisymmetric, while for corequisites symmetry is required.

\section{Compulsory prerequisites}\label{s_pr}
In the presence of compulsory prerequisites, we consider the case of additive preferences in Section \ref{sec:comp-add} and lexicographic preferences in Section \ref{sec:comp-lex}.  It turns out that the problem of finding a POM under additive preferences is NP-hard, as we show in Section \ref{sec:comp-add}.  Thus the majority of our attention is focused on the case of lexicographic preferences in Section \ref{sec:comp-lex}.  In that section we mainly consider algorithmic questions associated with the problems of (i) finding a POM, (ii) testing a matching for Pareto optimality, and (iii) finding a POM of maximum size.

\subsection{Additive preferences}
\label{sec:comp-add}
First we show that the assumption of additive preferences in {\sc capr} makes it difficult to compute a POM.

\begin{lem}\label{t_add}
The problem of finding a most-preferred feasible bundle of courses of a given applicant with additive preferences in {\sc capr} is NP-hard.
\end{lem}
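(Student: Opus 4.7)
The plan is to reduce from the NP-hard \textsc{Clique} problem: given a graph $G=(V,E)$ and integer $k\ge 2$, decide whether $G$ contains a clique on $k$ vertices. I will construct a single applicant whose most-preferred feasible bundle under additive preferences encodes the answer, so that an algorithm for our problem would decide \textsc{Clique} in polynomial time.

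From $(G,k)$, build an instance with one ``vertex course'' $c_v$ for every $v\in V$ and one ``edge course'' $c_e$ for every $e\in E$. The prerequisite order is defined by $c_e \rightarrow c_v$ for each endpoint $v$ of $e$, so that selecting $c_e$ forces selecting the courses corresponding to its two endpoints. Set the applicant's capacity to $q=k+\binom{k}{2}$, and assign utility $W$ to every edge course and $1$ to every vertex course, where $W$ is chosen large, e.g.\ $W=|V|+|E|+1$. Strictness of the preference list is restored by adding tiny distinct perturbations to the utilities; these are small enough not to affect any of the comparisons below.

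A feasible bundle, being a down-set of size at most $q$, corresponds to picking an edge set $E'\subseteq E$ together with the set $V'$ of all endpoints of $E'$ (extra isolated vertex courses never appear in an optimum, because capacity will be saturated and $W\gg 1$). Writing $j=|E'|$ and $\ell=|V'|$, the handshake-type inequality $j\le\binom{\ell}{2}$ holds. If $j>\binom{k}{2}$, then $\ell\ge k+1$, so $j+\ell\ge \binom{k}{2}+1+k+1>q$, violating feasibility. If $j=\binom{k}{2}$, feasibility forces $\ell\le k$, while $\binom{\ell}{2}\ge\binom{k}{2}$ forces $\ell\ge k$, hence $\ell=k$ and the $\binom{k}{2}$ edges lie on $k$ vertices, forming a $k$-clique in $G$. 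If $j<\binom{k}{2}$, the utility is at most $\bigl(\binom{k}{2}-1\bigr)W+q<\binom{k}{2}W+k$ by the choice of $W$. Thus the optimal utility equals $\binom{k}{2}W+k$ if and only if $G$ has a $k$-clique.

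The main obstacle is calibrating $q$ and $W$ so that the optimiser is simultaneously forced to take exactly $\binom{k}{2}$ edges on exactly $k$ vertices; this hinges on the handshake inequality $j\le\binom{\ell}{2}$, which creates the sharp trade-off between ``cheap'' edges (in terms of cardinality cost) and the extra vertex courses they drag into the bundle via the prerequisites. A minor technicality is ensuring distinct utilities to be compatible with a strict preference list, which is handled by the perturbation above.
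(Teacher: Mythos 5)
Your proof is correct, but it takes a genuinely different route from the paper's. The paper reduces from \textsc{knapsack}: a single applicant of capacity $W$, one course $c_i$ per item with utility essentially $p_i$, and $w_i-1$ dummy prerequisite courses of negligible utility attached to $c_i$, so that selecting $c_i$ consumes $w_i$ capacity slots; the most-preferred bundle then has utility at least $P$ if and only if the knapsack instance is a yes-instance. You instead reduce from \textsc{Clique}, using prerequisites to make each edge course drag in its two endpoint courses, and the handshake inequality $j\le\binom{\ell}{2}$ to force any feasible bundle attaining utility $\binom{k}{2}W+k$ to consist of exactly $\binom{k}{2}$ edges on exactly $k$ vertices, i.e.\ a $k$-clique. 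Your case analysis is sound ($j>\binom{k}{2}$ violates the capacity, $j=\binom{k}{2}$ forces $\ell=k$ and hence a clique, $j<\binom{k}{2}$ loses at least $W-\binom{k}{2}>0$ utility), and your final step --- with one applicant the optimal bundle's utility decides the instance --- is identical to the paper's. As for what each approach buys: the paper's encoding is more direct and arithmetic, but it spawns $w_i-1$ dummy courses per item, so it is a polynomial-time reduction only when the weights are polynomially bounded (effectively unary), and \textsc{knapsack} with unary weights admits a polynomial dynamic program; that reduction therefore has to be read with some care. Your \textsc{Clique} reduction uses only polynomially bounded numbers and capacities throughout, so it establishes the hardness cleanly and in the strong sense. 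One cosmetic suggestion: assume $k\le|V|$ and $\binom{k}{2}\le|E|$ up front (otherwise reject immediately), which makes the inequality $W>\binom{k}{2}$ needed in the case $j<\binom{k}{2}$ immediate.
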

\begin{proof} We transform from the {\sc knapsack} problem, which is defined as follows. An instance $I$ of {\sc knapsack} comprises a set of integers $w_1,w_2,\dots,w_n,p_1,p_2,\dots, p_n$, $W, P$.  The problem is to decide whether there exists a set $K\subseteq \{1,2\dots,n\}$ such that $\sum_{i\in K} w_i\le W$ and $\sum_{i\in K} p_i\ge P$.  {\sc knapsack} is NP-complete \cite{Kar72}.

Let us construct an instance $J$ of {\sc capr} as follows.  There is a single applicant $a_1$ such that $q(a_1)=W$.  The set of courses is $C\cup D$, where $C=\{c_1,c_2,\dots,c_n\}$ and $D=\cup_{i=1}^n \{d_i^1,d_i^2,\dots,d_i^{w_i-1}\}$.  For each course $c_i$ $(1\leq i\leq n)$, its utility for applicant $a_1$ is equal to $p_i+\delta_i$, where $\delta_i$ will be defined shortly.  Moreover $c_i$ has $w_i-1$ prerequisites $d_i^1,d_i^2,\dots, d_i^{w_i-1}$.
For each $i$ ($1\leq i\leq n)$ and $j$ ($1\leq j\leq w_i-1$), $d_i^j$ has utility $\varepsilon_i^j$ for $a_1$, where the $\delta$ and $\varepsilon$ values are all positive, distinct and add up to less than 1.  They are selected simply to ensure that $a_1$'s preferences over individual courses are strict.
It is easy to see that in $J$, $a_1$ has a feasible bundle with utility at least $P$ if and only if $I$ is a yes-instance of {\sc knapsack}. 

Clearly a polynomial-time algorithm for finding a most-preferred feasible bundle of courses for $a_1$ can be used to determine whether $a_1$ has a feasible bundle in $J$ with utility at least $P$, hence the result.
\end{proof}

In the case of just one applicant $a_1$, a matching $M$ is a POM if and only if $a_1$ is assigned in $M$ a most-preferred feasible bundle of courses, otherwise $M$ is dominated by assigning $a_1$ to this bundle.  Hence the above lemma directly implies the following result.

\begin{thm}
\label{thm:findPOMcapr}
Given an instance of {\sc capr} with additive preferences, the problem of finding a POM is NP-hard.
\end{thm}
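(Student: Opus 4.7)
The plan is to invoke Lemma \ref{t_add} directly, exploiting the fact that its reduction already produces a {\sc capr} instance $J$ with a single applicant $a_1$. Since the definition of a POM depends only on the bundles that the matching assigns to each applicant, I will argue that in any single-applicant instance the two tasks -- computing a POM and computing a most-preferred feasible bundle -- are polynomial-time equivalent. This gives NP-hardness of POM computation as an immediate corollary of the lemma.

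The key observation to verify is that a matching $M$ of $J$ is a POM if and only if $M(a_1)$ is a most-preferred feasible bundle for $a_1$. For the ``only if'' direction, suppose that $M(a_1)$ is not most-preferred and let $B^\ast$ be a most-preferred feasible bundle. Then the assignment $M^\ast = \{(a_1,c) : c\in B^\ast\}$ is a matching (the single applicant's prerequisite, capacity, and acceptability constraints are satisfied because $B^\ast$ is feasible, and the course capacities pose no obstruction since every course is demanded by at most one applicant, so $q(c)\ge 1$ suffices, as is the case in the instance produced by Lemma \ref{t_add}) and $a_1$ strictly prefers $M^\ast$ to $M$, so $M^\ast$ dominates $M$. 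For the ``if'' direction, any matching $M'$ dominating $M$ would require $a_1$ to strictly prefer $M'(a_1)$ to $M(a_1)$, contradicting that $M(a_1)$ is most-preferred.

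With this equivalence in hand, any polynomial-time algorithm that, on input $J$, outputs a POM $M$ can be converted in linear time into an algorithm returning a most-preferred feasible bundle, simply by reading off the set $M(a_1)$. Since the latter problem is NP-hard by Lemma \ref{t_add}, so is the former. There is essentially no technical obstacle here; the substantive work was already done in the {\sc knapsack} reduction of Lemma \ref{t_add}, and the single-applicant remark immediately preceding the theorem statement provides precisely the bridge required.
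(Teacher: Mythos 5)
Your proposal is correct and follows exactly the paper's route: the paper likewise derives the theorem from Lemma \ref{t_add} via the observation that in a single-applicant instance a matching is a POM if and only if the applicant receives a most-preferred feasible bundle. Your write-up merely makes the equivalence slightly more explicit (e.g.\ checking course capacities), which is fine.
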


Given the above negative result, we do not pursue additive preferences any further in this section, and instead turn our attention to lexicographic preferences.

\subsection{Lexicographic preferences}
\label{sec:comp-lex}
\subsubsection{Finding a POM}
In this section we explore variants of the sequential allocation mechanism, and show that one formulation allows us to find a POM in polynomial time.  This mechanism, referred to as {\sf SM-CAPR}, does however have some drawbacks: it is not truthful (see Section \ref{s_open}) and it is not able to compute all POMs in general.

In the context of course allocation when there are some dependencies among courses (for instance the constraints on prerequisites in {\sc capr}) the standard sequential mechanism might output an allocation that does not fulfil some constraints on prerequisites. On the other hand, if we require that an applicant can only choose a course if she is already assigned all its prerequisites, the output may be a matching that is not Pareto optimal.  This is illustrated by the following example.

\begin{example}\label{ex1}
Construct a {\sc capr} instance in which $A=\{a_1,a_2\}$ and $C=\{c_1,c_2,c_3,c_4\}$.  Each applicant has capacity $2$ and  each course has capacity  $1$.  The prerequisites of both applicants are the same, and are as follows:
\[
c_1\to c_3; \qquad c_2\to c_4.
\]
The applicants have the following preference lists:
\begin{eqnarray*}
P(a_1): c_1, c_2, c_4, c_3\\
P(a_2): c_2, c_1, c_3, c_4
\end{eqnarray*}


The sequential allocation mechanism with policy $\sigma=a_1,a_2,a_1,a_2$ will assign to applicant $a_1$ the bundle $\{c_1,c_4\}$ and to applicant $a_2$ the bundle  $\{c_2,c_3\}$. Clearly, neither of the assigned bundles fulfils the prerequisites.

Now suppose that in the sequential allocation mechanism an applicant is allowed to choose the most-preferred undersubscribed course \emph{for which she already has all the prerequisites}. Let the policy start with $a_1,a_2$. Applicant $a_1$ can choose neither $c_1$ nor $c_2$, as these courses require a prerequisite that she is not assigned yet. So she chooses $c_4$. Similarly, applicant $a_2$ will choose $c_3$. When these applicants are allowed to pick their next course, irrespective of the remainder of the policy, $a_1$ must choose $c_2$ and $a_2$ must choose $c_1$. So in the resulting matching $M$ we have $M(a_1)=\{c_2,c_4\}$ and $M(a_2)=\{c_1,c_3\}$. This matching is clearly not Pareto optimal, since both applicants (having lexicographic preferences) will strictly improve by exchanging their assignments. \qed
\end{example}

Therefore we propose a variant of the sequential allocation mechanism, denoted by {\sf SM-CAPR}, that can be regarded as being ``in between'' the serial dictatorship mechanism and the general sequential allocation mechanism. Suppose a policy $\sigma$ is fixed; again one applicant can appear in $\sigma$ several times, up to her capacity.  Applicant $a_i$ on her turn identifies her most-preferred course $c_j$ that she has not yet considered, and that she is not already assigned to (if no such course $c_j$ exists then $a_i$ is said to have \emph{exhausted} her preference list and will be assigned no more courses).  If $c_j$ is full or $a_i$ is already assigned to $c_j$ then $a_i$ considers the next course on her list (on the same turn).  We then compute the down-set  $\stackrel{\rightarrow_{a_i}}{c_j}$ of $c_j$ (we shall explain how to do this efficiently in the proof of Theorem \ref{thm_alg} below).  If all courses $c_k$ in $\stackrel{\rightarrow_{a_i}}{c_j}$ satisfy the property that either (i) $c_k$ has a free place or (ii) $c_k$ is already assigned to $a_i$, and the number of courses in $\stackrel{\rightarrow_{a_i}}{c_j}$ not already assigned to $a_i$ does not exceed the remaining capacity of $a_i$, then $a_i$ is assigned the bundle $\stackrel{\rightarrow_{a_i}}{c_j}$.
If it is impossible to assign bundle $\stackrel{\rightarrow_{a_i}}{c_j}$ to $a_i$ (as signified by the boolean \emph{feasible} becoming {\tt false}) then $a_i$ moves to the next course in her preference list until either she is assigned to some bundle or her preference list is exhausted.  This completes a single turn for $a_i$.

The pseudocode for {\sf SM-CAPR} is given in Algorithm \ref{alg:smcapr}.  This algorithm constructs a POM $M$ in a given {\sc capr} instance $I$ relative to a given policy $\sigma$.  Notice that serial dictatorship will be obtained as a special case of {\sf SM-CAPR} if all the copies of one applicant form a substring (i.e., a contiguous subsequence) of the policy.
\begin{algorithm}[t]
\small
\caption{{\sf SM-CAPR}}\label{alg:smcapr}
\begin{algorithmic}[1]
\Require {\sc capr} instance $I$ and a policy $\sigma$
\Ensure return $M$, a POM in $I$
\State $M := \emptyset$;
\For{{\bf each} applicant $a_i\in \sigma$ in turn} \label{line:forloop}
  \State feasible $:=$ {\tt false};
	\While{$a_i$ has not exhausted her preference list {\bf and not} feasible} \label{line:whileloop}
		\State $c_j :=$ next course in $a_i$'s list; \label{line:next}
		\If{$c_j\notin M(a_i)$ and $c_j$ is undersubscribed} \label{line:under}
			\State $S := \stackrel{\rightarrow_{a_i}}{c_j}$; \label{line:downset}
			\State feasible := {\tt true};
			\For{{\bf each} $c_k\in S$} \label{line:loop1start}
				\If{$c_k\in M(a_i)$}
					\State $S := S\backslash \{c_k\}$;
				\ElsIf{$c_k$ is full} \label{line:full}
					\State feasible := {\tt false};
				\EndIf
			\EndFor \label{line:loop1end}
			\If{feasible}
				\If{$|M(a_i)|+|S|\leq q(a_i)$} \label{line:quota}
					\For{{\bf each} $c_k\in S$} \label{line:loop2start}
						\State $M:=M\cup \{(a_i,c_k)\}$;
					\EndFor \label{line:loop2end}
				\Else
					\State feasible $:=$ {\tt false};
				\EndIf
			\EndIf
		\EndIf
	\EndWhile
\EndFor \label{line:penultimate}
\State \Return $M$; \label{line:return}
\end{algorithmic}
\normalsize
\end{algorithm}
We now show that {\sf SM-CAPR} constructs a POM and is an efficient algorithm.

\begin{thm}\label{thm_alg}
Algorithm {\sf SM-CAPR} produces a POM for a given instance $I$ of {\sc capr} and for a given policy $\sigma$ in $I$.  The complexity of the algorithm is $O(N+n_2(L+\Delta))$, where $N$ is the sum of the applicants' capacities, $n_2$ is the number of courses, $L$ is the total length of the applicants' preference lists and $\Delta$ is the total number of immediate prerequisites of each course $c_j$ in $\to_{a_i}$, taken over each applicant $a_i$.
\end{thm}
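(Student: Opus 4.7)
My plan is to verify in sequence three things: that the output is a valid matching, that it is Pareto optimal, and that the running time is as claimed.

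For the first, I would argue that each assignment step preserves feasibility. Applicant capacity is enforced explicitly by the test $|M(a_i)|+|S|\le q(a_i)$ at line \ref{line:quota}, while a course $c_k$ is added to $M(a_i)$ only if it survived the full-check at line \ref{line:full}, so course capacities are respected. Feasibility with respect to prerequisites follows because whenever we extend $M(a_i)$ we do so by the entire down-set $\stackrel{\rightarrow_{a_i}}{c_j}$ (minus courses already in $M(a_i)$), so $M(a_i)$ remains a union of down-sets and hence itself a down-set, meeting condition $(iii)$ of Definition \ref{def:compulsory}. The preprocessing assumption $\stackrel{\rightarrow_{a_i}}{c_j}\subseteq P(a_i)$ takes care of condition $(i)$, and $(ii)$ is guaranteed by the same line \ref{line:quota} test.

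For Pareto optimality I would argue by contradiction. Suppose $M'$ dominates $M$. By the characterisation of lexicographic preferences, for every applicant $a_i$ with $M'(a_i)\neq M(a_i)$ the top course $\hat c_{a_i}$ in the symmetric difference $M'(a_i)\oplus M(a_i)$ lies in $M'(a_i)\setminus M(a_i)$. Among all such pairs $(a_i,\hat c_{a_i})$, consider the one whose turn $t^*$ in the policy $\sigma$ at which $a_i$ first \emph{considers} $\hat c_{a_i}$ in the while loop is minimal. At $t^*$ the algorithm failed to assign the bundle $\stackrel{\rightarrow_{a_i}}{\hat c_{a_i}}$, either because some $c_k$ in the down-set was already full, or because the remaining capacity of $a_i$ could not accommodate the down-set. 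In the first case, $M'$ nevertheless assigns $\hat c_{a_i}$ (and thus $c_k$) to $a_i$, so a pigeon-hole at $c_k$ forces some $a_l$ with $c_k\in M(a_l)\setminus M'(a_l)$; then $a_l$ has her own witness $\hat c_{a_l}$ preferred to $c_k$ lying in $M'(a_l)\setminus M(a_l)$, and since $c_k$ was already full before $t^*$, $a_l$'s corresponding consideration turn is strictly earlier than $t^*$, contradicting the minimality. The capacity case is handled analogously by tracing the courses that filled $a_i$'s bundle before $t^*$.

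The main obstacle I expect is the treatment of applicants who never consider $\hat c_{a_i}$ during the algorithm --- for example because they exhausted their preference list or used up their capacity on prerequisite bundles of more-preferred courses earlier. Here the lexicographic assumption is essential: every course of $M(a_i)$ ranked above $\hat c_{a_i}$ lies outside the symmetric difference (else $\hat c_{a_i}$ would not be its top element), so $M'$ would have to give $a_i$ all of those strictly-preferred courses as well, which forces $M'$ either to exceed $a_i$'s capacity, or to displace some other applicant on one of them, reinitiating the chain and again contradicting the minimality of $t^*$.

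For the complexity, I would charge work per applicant. The outer loop at line \ref{line:forloop} runs $|\sigma|\le N$ times, contributing $O(N)$. Reading the ``next course in $a_i$'s list'' at line \ref{line:next} as a persistent pointer that advances across turns of $a_i$, the while loop executes at most $L_i=|P(a_i)|$ times in total for applicant $a_i$. Each such execution computes $\stackrel{\rightarrow_{a_i}}{c_j}$ by a DFS in $a_i$'s prerequisite DAG at cost $O(n_2+\Delta_i)$, where $\Delta_i$ is the number of immediate prerequisite edges for $a_i$, and then spends $O(n_2)$ in the inner loops at lines \ref{line:loop1start}--\ref{line:loop1end} and \ref{line:loop2start}--\ref{line:loop2end}. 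Summing over applicants yields $\sum_i L_i(n_2+\Delta_i)=O(n_2 L)+\sum_i L_i\Delta_i$; using the bound $L_i\le n_2$ gives $\sum_i L_i\Delta_i\le n_2\Delta$, producing the claimed $O(N+n_2(L+\Delta))$.
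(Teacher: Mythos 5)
Your overall strategy --- take a dominating matching $M'$, locate the earliest while-loop iteration at which some applicant considers a course of $M'(a_i)\setminus M(a_i)$, and derive a contradiction from the fact that the blocking course is already full of applicants who retain it in $M'$ --- is essentially the paper's proof, and your complexity accounting gives the paper's bound (the paper precomputes transitive closures rather than running a DFS per query, but both yield $O(N+n_2(L+\Delta))$). However, two steps as written do not close. First, your treatment of an applicant $a_i$ who never reaches $\hat c_{a_i}$ at line \ref{line:next} is insufficient: you argue only that the courses of $M(a_i)$ ranked \emph{above} $\hat c_{a_i}$ must also lie in $M'(a_i)$, but $M(a_i)$ may contain courses ranked below $\hat c_{a_i}$ (low-ranked prerequisites swept in as part of a down-set), and without accounting for those you cannot conclude that $q(a_i)$ is exceeded; moreover the fallback ``displace some other applicant on one of them'' is vacuous, since those courses are already held by $a_i$ in $M$ and giving them to her in $M'$ displaces nobody. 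The correct observation is stronger: if $a_i$ never considers $\hat c_{a_i}$ then she never exhausted her list, so $M(a_i)$ is a union of down-sets $\stackrel{\rightarrow_{a_i}}{c}$ of courses $c$ identified at line \ref{line:next}, all of which precede $\hat c_{a_i}$ in $P(a_i)$ and hence lie in $M'(a_i)$; since $M'(a_i)$ is itself a down-set, $M'(a_i)\supseteq M(a_i)\cup\{\hat c_{a_i}\}$ with $a_i$ full in $M$, a capacity violation. The same closure fact is what you need to dispose of your ``capacity case'' at $t^*$, which (as the paper shows) in fact cannot occur at all rather than being ``handled analogously''.

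Second, in the chain step you assert that $a_l$'s consideration turn for $\hat c_{a_l}$ precedes $t^*$ merely because $c_k$ filled up before $t^*$. But $a_l$ may have acquired $c_k$ only as a prerequisite inside the down-set of some more-preferred course $c^*$ identified at line \ref{line:next}, without ever considering $c_k$ itself, and $\hat c_{a_l}$ could in principle sit between $c^*$ and $c_k$ in $P(a_l)$. You need the dichotomy: either $c^*\in M'(a_l)$, in which case $c_k\in M'(a_l)$ by feasibility of $M'(a_l)$, contradicting $c_k\in M(a_l)\setminus M'(a_l)$; or $c^*\notin M'(a_l)$, in which case $c^*$ lies in the symmetric difference, so $\hat c_{a_l}$ is preferred to $c^*$ and was considered strictly before the stage at which $c^*$ was, hence before $t^*$. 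With these two repairs your argument coincides with the paper's (whose own shortcut is to observe directly that every course assigned in $M$ at a stage before $t^*$ is retained in $M'$ by its holder, so the full blocker $c_k$ has no room for $a_i$ in $M'$ either).
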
 
\begin{proof} It is straightforward to verify that the assignment $M$ produced by {\sf SM-CAPR} is a matching in $I$.  Suppose for a contradiction that $M$ is not a POM.  Then there exists a matching $M'$ that dominates $M$. Let $A'$ be the set of applicants who prefer their assignment in $M'$ to their assignment in $M$. 
Define a \emph{stage} to be an iteration of the {\bf while} loop, and for a given stage $s$, define its \emph{number} to be the integer $k$ such that $s$ is the $k$th iteration of the {\bf while} loop, taken over the entire execution of the algorithm.  For each $a_j\in A'$, consider the first stage where a course, say $c_{i_j}$, was identified for $a_j$ at line \ref{line:next}, such that $c_{i_j}\in M'(a_j)\backslash M(a_j)$, and let $s_j$ be the number of this stage.  Let $a_k=\arg\min_{a_j\in A'} \{s_j\}$.

As $c_{i_k}\in M'(a_k)$, all the prerequisites of $c_{i_k}$ also belong to $M'(a_k)$.  Since $s_k$ is the first stage in which a course $c_{i_k}$ was identified for $a_k$ in line \ref{line:next}, such that $c_{i_k}\in M'(a_k)\backslash M(a_k)$, all the courses assigned in $M$ to any applicant $a_j$ in previous stages also belong to $M'(a_j)$, for otherwise $M'$ does not dominate $M$.  Also, clearly $c_{i_k}\notin M(a_k)$.  Thus it was not the case that applicant $a_k$ failed to receive course $c_{i_k}$ in $M$ at stage $s_k$ because $a_k$ did not have room for $c_{i_k}$ and all of its prerequisites not already assigned to her in $M$.  Rather, applicant $a_k$ failed to receive course $c_{i_k}$ in $M$ at stage $s_k$ because at least one course in $\stackrel{\rightarrow_{a_k}}{c_{i_k}}$, say $c_r$, was already full in $M$ before stage $s_k$.  It follows from our previous remark that in $M'$, all the places in $c_r$ are occupied by applicants other than $a_k$.  Thus $c_{i_k}$ cannot be assigned to $a_k$ in $M'$ after all, a contradiction.

To derive the complexity bound of the algorithm, let us first consider the representation of the strict partial order $\to_{a_i}$ of prerequisites for each applicant $a_i$.  As mentioned in Section \ref{sec:comp-prereq}, we can assume that the transitive reduction of $\to_{a_i}$, representing the immediate prerequisites of each course, is given as an input to the algorithm.  Furthermore, we can assume that this is represented as a directed acyclic graph (DAG) $D_{a_i}$ in adjacency list form.  The first task is to construct the transitive closure of $D_{a_i}$ for each $a_i\in A$, giving a DAG $D^*_{a_i}$, again in adjacency list form, from which $\stackrel{\rightarrow_{a_i}}{c_j}$, required in line \ref{line:downset}, can be returned as a list in $O(n_2)$ time and space, for any course $c_j\in C$.  The transitive closure of $D_{a_i}$ can be found in $O(n_2 \Delta_{a_i})$ time, where $\Delta_{a_i}$ is the number of arcs in $D_{a_i}$, using standard graph traversal techniques, meaning that the total time required to compute $D^*_{a_i}$ for all $a_i\in A$ is $O(n_2\Delta)$, where $\Delta=\sum_{a_i\in A} \Delta_{a_i}$.

We next observe that the number of iterations of the {\bf for} loop in line \ref{line:forloop} is $O(|\sigma|=N)$, where $N$ is the sum of the applicants' capacities, whilst the total number of iterations of the {\bf while} loop in line \ref{line:whileloop}, taken over the entire execution of the algorithm, is $O(L)$, where $L$ is the total length of the applicants' preference lists.  At line \ref{line:next} we assume that the next course for an applicant $a_i$ is maintained by a pointer that initially points to the head of $a_i$'s preference list, and once the course $c_j$ pointed to by $a_i$'s pointer has been found, the pointer moves on one position (if $c_j$ was the last course on $a_i$'s list then the pointer becomes null, indicating that $a_i$ has exhausted her list).  Each of the for loops occupying lines \ref{line:loop1start}-\ref{line:loop1end} and \ref{line:loop2start}-\ref{line:loop2end} has $O(n_2)$ overall complexity.  Thus the overall complexity of the algorithm is $O(N+n_2(L+\Delta))$ as claimed.
\end{proof}

The complexity of {\sf SM-CAPR} is no better than $O(N+Ln_2)$ in the worst case, as the following example shows.

\begin{example}
Consider a {\sc cacr} instance in which $A=\{a_1,a_2,\dots,a_n\}$ is the set of applicants and $C=\{c_1,c_2,\dots,c_{2n}\}$ is the set of courses, for some $n\geq 1$. Assume that each course has capacity 1, whilst each applicant has capacity $n$ and ranks all courses in increasing indicial order.  Also suppose that the prerequisites for each applicant are as follows:
$$c_1 \to c_2\to \dots \to c_{2n}.$$
There are $n$ POMs: in the POM $M_i$ ($1\leq i\leq n$), $a_i$ is assigned the set of courses $\{c_{n+1},c_{n+2},\dots,c_{2n}\}$ and no course is assigned to any other applicant.  Given any policy $\sigma$, let $a_i$ be the first applicant considered during an execution of {\sf SM-CAPR}.  When $\stackrel{\rightarrow_{a_i}}{c_1}$ is constructed in line \ref{line:downset}, $2n$ courses are returned; likewise when $\stackrel{\rightarrow_{a_i}}{c_2}$ is constructed, $2n-1$ courses are returned, and so on.  This continues until $\stackrel{\rightarrow_{a_i}}{c_{n+1}}$ is constructed, leading to the matching $M_i$ being formed at this while loop iteration.  Note that even if the process of constructing $S=\stackrel{\rightarrow_{a_i}}{c_j}$ were to halt as soon as $|S|>q(a_i)$, the total number of courses checked at this step of {\sf SM-CAPR} is still $\Omega(n^2)$.  Similarly, the number of courses checked at each other applicant's turn in the policy is also $\Omega(n^2)$; the only difference being that in each such case {\sf SM-CAPR} determines that $c_{n+i}$ is full, for each $i$ ($1\leq i\leq n)$.  The overall number of steps used by {\sf SM-CAPR} is then $\Omega(n^3)=\Omega(N+Ln_2)$. \qed
\end{example}

Our next example indicates that in general, {\sf SM-CAPR} is not capable of generating all POMs for a given {\sc capr} instance.


\begin{example}\label{ex3}
{\sf SM-CAPR} is not able to produce all POMs, even in the case when there are only two applicants $a_1,a_2$ and the capacity of each course is 1. We provide two instances to illustrate this. In  $I_1$ the prerequisites of all applicants are the same.  In $I_2$ they are different, but each course has at most one prerequisite.

In $I_1$, we have  $C=\{c_1,c_2,c_3\}$, and course $c_1$ has two prerequisites as follows: 
\begin{equation}\label{pq2}
c_1\to c_2; \qquad c_1\to c_3.
\end{equation}
Each applicant has capacity 3 and the following preference list: $c_1,c_2,c_3$.

Depending on the policy, {\sf SM-CAPR} outputs either the matching that assigns all three courses to $a_1$, or  the matching that assigns all three courses to $a_2$. However, it is easy to see that the two matchings that assign $c_2$ to one applicant and  $c_3$ to the other one are also Pareto optimal.

In $I_2$, we have  $C=\{c_1,c_2,c_3\}$. Now the prerequisites of the applicants are different:
\begin{equation}\label{pq3}
c_1\to_{a_1} c_3; \qquad c_2\to_{a_2} c_3,
\end{equation}
Each applicant has capacity $2$ and their preferences are as follows:
$$
P(a_1): c_1, c_2, c_3\qquad\
P(a_2): c_2, c_1, c_3. \\
$$
There are $4$ different POMs, as follows:
\[
\begin{array}{ll}
M_1(a_1)=\{c_1,c_3\}, & M_1(a_2)=\emptyset; \\ 
M_2(a_1)=\emptyset, &  M_2(a_2)=\{c_2,c_3\}; \\
M_3(a_1)=\{c_2\}, & M_3(a_2)=\{c_1,c_3\};\\
M_4(a_1)=\{c_2,c_3\}, & M_3(a_2)=\{c_1\}.
\end{array}
\]

{\sf SM-CAPR} outputs $M_1$ with a policy in which $a_1$ is first, and $M_2$ with a policy in which $a_2$ is first.  Notice that neither $M_3$ nor $M_4$ can be obtained by {\sf SM-CAPR}. \qed
\end{example}

Theorem \ref{thm:findPOMcapr} shows that finding a POM in the presence of additive preferences is NP-hard.  It is instructive to show where {\sf SM-CAPR} can fail to find a POM in this context,
\begin{example}
Let $I$ be an instance of {\sc capr} in which there are two applicants, $a_1,a_2$, each of which has capacity 2, and four courses, $c_1,c_2,c_3,c_4$, each of which has capacity 1.  The prerequisites of both applicants are the same, and are as follows:
\[
c_1\to c_2; \qquad c_3\to c_4.
\]
The utilities of the courses for the applicants are as follows:
\begin{eqnarray*}
u_{a_1}(c_1)=u_{a_2}(c_3)=6\\
u_{a_1}(c_3)=u_{a_2}(c_1)=4\\ 
u_{a_1}(c_4)=u_{a_2}(c_2)=3\\
u_{a_1}(c_2)=u_{a_2}(c_4)=0
\end{eqnarray*}
Regardless of the policy, {\sf SM-CAPR} constructs the matching $M=\{(a_1,c_1),(a_1,c_2),(a_2,c_3),$ $(a_2,c_4)\}$.  $M$ is not a POM as it is dominated by $M'=\{(a_1,c_3),(a_1,c_4),(a_2,c_1),(a_2,c_2)\}$. \qed
\end{example}

\subsubsection{Testing for Pareto optimality}
In the previous subsection we gave a polynomial-time algorithm for constructing a POM in an instance of {\sc capr}.  It is also reasonable to expect that an alternative approach could involve starting with an arbitrary matching, and for as long as the current matching $M$ is dominated, replace $M$ by any matching that dominates it.  However, the difficulty with this method is that the problem of determining whether a matching is Pareto optimal is computationally hard, as we demonstrate by our next result.  This hardness result also shows that there is unlikely to be a ``nice'' (polynomial-time checkable) characterisation of a POM, in contrast to the case where there are no prerequisites \cite{CEFMMP14}.  We firstly define the following problems:\\

\begin{minipage}{5.5in}
\emph{Name:} \xc\\
\emph{Instance:} a set $X=\{x_1,x_2,\dots, x_{3n}\}$ and a set $\mathcal T=\{T_1,T_2,\dots,T_m\}$ such that for each $i$ ($1\leq i\leq m$), $T_i\subseteq X$ and $|T_i|=3$.\\
\emph{Question:} is there a subset $\mathcal T'$ of $\mathcal T$ such that $T_i\cap T_j=\emptyset$ for each $T_i,T_j\in \mathcal T'$ and $\cup_{T_i\in \mathcal T'} T_i=X$?\\
\end{minipage}

\begin{minipage}{5.5in}
\emph{Name:} {\sc dom capr}\\
\emph{Instance:} an instance $I$ of {\sc capr} and a matching $M$ in $I$\\
\emph{Question:} is there a matching $M'$ that dominates $M$ in $I$?\\
\end{minipage}

\begin{thm}\label{t_POM} 
{\sc dom capr} is NP-complete even if each course has capacity 1 and has at most one immediate prerequisite for each applicant.
\end{thm}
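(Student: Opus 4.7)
Membership in NP is immediate: a dominating matching $M'$ is a polynomial-size certificate, and verifying that $M'$ is a valid {\sc capr}-matching and dominates $M$ amounts, for each applicant $a_i$, to scanning the symmetric difference $M(a_i)\oplus M'(a_i)$ along $a_i$'s preference list.

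For NP-hardness I would reduce from \xc. Given an \xc\ instance $(X,\mathcal T)$ with $X=\{x_1,\ldots,x_{3n}\}$ and $\mathcal T=\{T_1,\ldots,T_m\}$, I would build a {\sc capr} instance $I$, with every course of capacity $1$ and at most one immediate prerequisite per applicant per course, together with a matching $M$ in $I$, such that $M$ is dominated if and only if $\mathcal T$ admits an exact cover of $X$. For each element $x_j$, introduce an element applicant $a_j$ of capacity $1$ and an element course $e_j$. For each triple $T_i=\{x_{i_1},x_{i_2},x_{i_3}\}$, introduce a triple applicant $b_i$ of capacity $3$, three private auxiliary courses $v_{i_1}^i,v_{i_2}^i,v_{i_3}^i$, the applicant-specific prerequisite chain $e_{i_1}\to_{b_i} e_{i_2}\to_{b_i} e_{i_3}$, and a preference list $e_{i_1},v_{i_1}^i,v_{i_2}^i,v_{i_3}^i,e_{i_2},e_{i_3}$ in decreasing order of preference. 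Set $M(a_j)=\{e_j\}$ and $M(b_i)=\{v_{i_1}^i,v_{i_2}^i,v_{i_3}^i\}$. A finite case check over $b_i$'s feasible bundles (constrained by the chain and her capacity $3$) shows that the unique feasible bundle that $b_i$ strictly prefers to $M(b_i)$ is the ``activation'' bundle $\{e_{i_1},e_{i_2},e_{i_3}\}$. I then rank in each $a_j$'s preference list every $v_j^i$ with $x_j\in T_i$ strictly above $e_j$, so the only way for $a_j$ to release $e_j$ without a Pareto deterioration is to pick up some $v_j^i$ released by an activated $b_i$ covering $x_j$.

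Any candidate $M'$ dominating $M$ is therefore described by the set $\mathcal T'=\{T_i:b_i\text{ activates in }M'\}$, which is automatically pairwise element-disjoint because each $e_j$ has capacity $1$. The main technical obstacle --- without which the reduction collapses, since activating any single $T_i$ already yields a dominating matching --- is to force $\mathcal T'$ to be a \emph{full} exact cover of $X$ rather than merely a partial cover. To this end I would add one linking applicant $z$ together with a short sequence of fresh witness courses, and calibrate $z$'s preference list and her one-immediate-prerequisite tree so that, in any $M'$, $z$ is strictly worse off than in $M$ unless every single $e_j$ has been released by $a_j$: intuitively, any $a_j$ keeping $e_j$ blocks a corresponding witness in $z$'s only non-deteriorating bundle via a prerequisite edge, forcing $z$'s $M'$-bundle to slip below $M(z)$ lexicographically. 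Calibrating $z$ correctly while preserving the capacity-$1$ and single-prerequisite restrictions is the delicate step; once it is in place, any dominating $M'$ must release every $e_j$, hence cover every $x_j$ by an activated $b_i$, making $\mathcal T'$ an exact cover. Conversely, given an exact cover $\mathcal T'$, activating exactly the $b_i$ with $T_i\in\mathcal T'$, reassigning each $a_j$ to the corresponding released $v_j^i$, and placing $z$ in her all-released witness bundle yields a matching that strictly improves every activated $b_i$ and every $a_j$ while leaving every inactive $b_i$ and $z$ weakly no worse, so it dominates $M$. Combined with NP membership, this gives the claimed NP-completeness under the stated restriction.
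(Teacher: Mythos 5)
Your NP-membership argument is fine, and your overall strategy---reduce from \xc\ and add a single ``linking'' applicant whose role is to force the set of activated triples to be a \emph{complete} cover rather than an arbitrary partial one---is exactly the plan the paper follows. However, the step you yourself flag as delicate is precisely the step you have not supplied, and as sketched it cannot be made to work. The mechanism you describe for $z$ is that each $a_j$ keeping $e_j$ ``blocks a witness in $z$'s only non-deteriorating bundle via a prerequisite edge, forcing $z$'s $M'$-bundle to slip below $M(z)$.'' But a dominating matching is under no obligation to move $z$ at all: in a partial activation (activate one $b_i$, hand each of $a_{i_1},a_{i_2},a_{i_3}$ its released $v^i$-course, leave everyone else untouched), $z$ simply retains $M(z)$ and is indifferent, the three element applicants and $b_i$ strictly improve, and $M$ is dominated regardless of how $z$'s list and prerequisites are calibrated. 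The only way a linking applicant can veto partial activations is if the act of activating necessarily \emph{takes a course away from her}, so that she is forced to recover, and her prerequisite structure then forces her recovery to cascade through every triple applicant. Your construction cannot do this, because every course consumed by an activation ($e_{i_1},e_{i_2},e_{i_3}$ and the private $v^i$'s) is held in $M$ by an applicant who is simultaneously compensated.

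This is exactly what the paper's construction arranges differently. There, the triple applicants $a_1,\dots,a_m$ hold only the courses $y_i$ in $M$, while the linking applicant $a_{m+1}$ holds \emph{all} of $X$ and all of a padding set $W$ of size $m-n$; the prerequisites are the chain $T_i\to_{a_i} x_{i_1}\to_{a_i} x_{i_2}\to_{a_i} x_{i_3}$ for $a_i$ and the chain $y_1\to_{a_{m+1}}y_2\to_{a_{m+1}}\cdots\to_{a_{m+1}}y_m$ for $a_{m+1}$. Any improvement by some $a_i$ requires taking $X$- or $W$-courses from $a_{m+1}$, who can then only avoid being worse off by obtaining $y_1$ and hence, via her chain, all of $Y$; this strips $y_i$ from every $a_i$, so \emph{every} applicant must strictly improve, and the capacity bound $|W|=m-n$ then pins down exactly $n$ applicants who take their $T_i$ together with its three $X$-prerequisites, i.e., an exact cover. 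If you want to salvage your variant, you would need to restructure it along these lines (e.g., put the contested courses into $z$'s assignment in $M$ and give $z$ a prerequisite chain over the courses the element applicants hold); as written, the reduction proves only that $M$ is always dominated whenever $\mathcal T\neq\emptyset$.
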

\begin{proof}
Clearly {\sc dom capr} is in NP.  To show NP-hardness, we reduce from \xc, which is NP-complete \cite{Kar72}.  Let $I$ be an instance of \xc\ as defined above.  For each $T_i\in \mathcal T$, let us denote the elements that belong to $T_i$ by $x_{i_1},x_{i_2},x_{i_3}$. Obviously, we lose no generality by assuming that $m\geq n$.

We construct an instance $J$ of {\sc dom capr} based on $I$ in the following way. 
The set of applicants is $A=\{a_1,a_2,\dots,a_{m+1}\}$. The capacities are $q(a_i)=4$ ($1\leq i \leq m$) and $q(a_{m+1})=2n+m$. The set of courses is $C=\mathcal T\cup X\cup Y\cup W$, where
$\mathcal T=\{T_1,T_2,\dots,T_m\}, X=\{x_1,x_2,\dots,x_{3n}\}$, $Y=\{y_1,y_2,\dots,y_m\}$ and $W=\{w_1,w_2,\dots,w_{m-n}\}$.
(Some of the courses in $J$ derived from the elements and sets in $I$ are denoted by identical symbols, but no ambiguity should arise.)  Each course has capacity 1.  The preferences of the applicants are:
\[
\begin{array}{rll}
P(a_i): & T_i,[W],y_i,x_{i_1},x_{i_2},x_{i_3} & (1\leq i\leq m)\\
P(a_{m+1}): & y_1,[X],[W],y_2,\dots, y_m
 \end{array}
\]
where the symbols $[W]$ and $[X]$ denote all the courses in $W$ and $X$, respectively, in an arbitrary strict order.  Recall that $\{x_{i_1},x_{i_2},x_{i_3}\}\subseteq X$ ($1\leq i\leq m$).  The prerequisites of applicants are:
\[
\begin{array}{rll}
a_i: &T_i\to_{a_i} x_{i_1}\to_{a_i} x_{i_2}\to_{a_i} x_{i_3} & (1\leq i \leq m)\\
a_{m+1}:& y_1\to_{a_{m+1}} y_2\to_{a_{m+1}} \dots \to_{a_{m+1}} y_m&
\end{array}
\]
Define the following matching:
\[M=\{(a_i,y_i) : 1\leq i\leq m\}\cup
\{(a_{m+1},x_j) : 1\leq j\leq 3n)\}\cup \{(a_{m+1},w_k) : 1\leq k\leq m-n)\}.
\]
We claim that $I$ admits an exact cover if and only if $M$ is dominated in $J$.

For, suppose that $\{T_{j_1},T_{j_2},\dots,T_{j_n}\}$ is an exact cover in $I$.  We construct a matching $M'$ in $J$ as follows.  For each $k$ ($1\leq k\leq n$), in $M'$, assign $a_{j_k}$ to $T_{j_k}$ and to $a_{j_k}$'s three prerequisites of $T_{j_k}$ that belong to $X$.  Let $A'=\{a_{j_1},a_{j_2},\dots,a_{j_n},a_{m+1}\}$ and let $A\backslash A'=\{a_{k_1},a_{k_2},\dots,a_{k_{m-n}}\}$.  For each $r$ ($1\leq r\leq m-n$), in $M'$, assign $a_{k_r}$ to $w_r$.  Finally in $M'$, assign $a_{m+1}$ to every course in $Y$.  It is straightforward to verify that $M'$ dominates $M$ in $J$.

Conversely, suppose there exists a matching $M'$ that dominates $M$ in $J$. Then, at least one applicant must be better off in $M'$ compared to $M$.
 
If $a_{m+1}$ improves, she must obtain $y_1$ and so, due to her prerequisites, all the courses in $Y$. This means that each applicant $a_i$ ($1\leq i\leq m$) must obtain a course that she prefers to $y_i$.

Each such applicant $a_i$ can improve relative to $M$ in two ways. Either she obtains in $M'$ a course in $W$, or she obtains $T_i$.  In the latter case then she must receive the corresponding courses $x_{i_1},x_{i_2},x_{i_3}$ in $M'$. In either of these two cases, since $a_{m+1}$ cannot be worse off, she must obtain in $M'$ the course $y_1$ and hence all courses in $Y$.

This means that in $M'$ all the applicants must strictly improve compared to $M$. As there are only $n-m$ courses in $W$, there are exactly $n$ applicant in $A\backslash \{a_{m+1}\}$ -- let these applicants be $\{a_{j_1},a_{j_2},\dots,a_{j_n}\}$ -- who obtain a course in $\mathcal T$ and its three prerequisites in $X$. As the capacity of each course is 1, it follows that $\{T_{j_k} : 1\leq j\leq n\}$ is an exact cover in $I$.
\end{proof}

We remark that the variant of {\sc dom capr} for additive preferences is also NP-complete by Theorem \ref{t_POM}, since lexicographic preferences can be viewed as a special case of additive preferences by creating utilities that steeply decrease in line with applicants' preferences -- see \cite{BL11} for more details.
\subsubsection{Finding large POMs}  
Example \ref{ex3} shows that an instance of {\sc capr} may admit POMs of different cardinalities, where the \emph{cardinality} of a POM refers to the number of occupied course slots. It is known that finding a POM with minimum cardinality is an NP-hard problem even for {\sc ha},  the House Allocation problem (i.e., the restriction of {\sc ca} in which each applicant and each course has capacity 1) \cite[Theorem 2]{ACMM04}. However, by contrast to the case for {\sc ha} \cite[Theorem 1]{ACMM04} and {\sc ca} \cite[Theorem 6]{CEFMMP14}), the problem of finding a maximum cardinality POM in the {\sc capr} context is NP-hard, as we demonstrate next via two different proofs.  Our first proof of this result shows that hardness holds even if the matching is not required to be Pareto optimal.

We firstly define some problems.  Let {\sc max capr} and {\sc max pom capr} denote the problems of finding a maximum cardinality matching and a maximum cardinality POM respectively, given an instance of {\sc capr}.  Let {\sc max capr-d} denote the decision version of {\sc max capr}: given an instance $I$ of {\sc capr} and an integer $K$, decide whether $I$ admits a matching of cardinality at least $K$.  We obtain {\sc max pom capr-d} from {\sc max pom capr} similarly.

\begin{thm}\label{t_MAX-CAPR-hard}
{\sc max capr-d} is NP-complete, even if each applicant has capacity 4 and each course has capacity 1, and the prerequisites are the same for all applicants.
\end{thm}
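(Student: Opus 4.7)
The plan is to reduce from \xc, which is NP-complete and has already been used in the proof of Theorem~\ref{t_POM}. Membership of {\sc max capr-d} in NP is immediate, since the feasibility and cardinality of a proposed matching can both be verified in polynomial time.

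Given an \xc\ instance with $X = \{x_1, x_2, \ldots, x_{3n}\}$ and $\mathcal{T} = \{T_1, T_2, \ldots, T_m\}$, where $T_i = \{x_{i_1}, x_{i_2}, x_{i_3}\}$, I would construct a {\sc capr} instance $J$ in polynomial time as follows. The applicant set is $A = \{a_1, a_2, \ldots, a_m\}$ (one per triple), each $a_i$ having capacity $4$; the course set is $\{c_{T_i} : 1 \leq i \leq m\} \cup \{c_{x_j} : 1 \leq j \leq 3n\}$, with every course having capacity $1$. Applicant $a_i$'s preference list is $P(a_i) = c_{T_i}, c_{x_{i_1}}, c_{x_{i_2}}, c_{x_{i_3}}$. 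The prerequisite partial order $\rightarrow$, common to all applicants, is generated by $c_{T_i} \rightarrow c_{x_{i_r}}$ for each $i$ and each $r \in \{1,2,3\}$; all arcs go from set-courses to element-courses so no cycles arise, and $\stackrel{\rightarrow}{c_j} \subseteq P(a_i)$ holds for every $c_j \in P(a_i)$. Set $K = 4n$. The design idea is that the capacity of $4$ matches exactly the bundle ``one set-course plus its three forced prerequisites'', and restricting $a_i$'s list to $T_i$-related courses guarantees that $a_i$ can only be matched via her own $T_i$.

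For correctness, if $\{T_{j_1}, \ldots, T_{j_n}\}$ is an exact cover of $X$, assigning each $a_{j_k}$ to the feasible bundle $\{c_{T_{j_k}}, c_{x_{(j_k)_1}}, c_{x_{(j_k)_2}}, c_{x_{(j_k)_3}}\}$ yields a matching of size $4n$ (the disjointness of the cover respects the capacity-1 constraints on element courses). Conversely, given any matching $M$ in $J$, put $S = \{i : c_{T_i} \in M(a_i)\}$. Feasibility forces each $a_i$ with $i \in S$ to be assigned all three prerequisites of $c_{T_i}$, contributing exactly $4$ to $|M|$; the capacity-1 constraints on element courses then imply that $\{T_i : i \in S\}$ are pairwise disjoint, so $|S| \leq n$ and $3|S|$ element courses are consumed. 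Each remaining applicant $a_i$ ($i \notin S$) is matched only to element courses within $T_i$, and in aggregate these assignments compete for the $3n - 3|S|$ element courses left free. Hence $|M| \leq 4|S| + (3n - 3|S|) = 3n + |S|$, so $|M| \geq 4n$ forces $|S| = n$ and $\{T_i : i \in S\}$ is an exact cover. The main point requiring care is the bound $|M| \leq 3n + |S|$; once the observation is made that set-courses are matched only to ``their own'' applicants and that $|M(a_i)| \leq 3$ whenever $i \notin S$, the remainder is a direct counting argument.
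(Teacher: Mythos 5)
Your proof is correct, but it takes a genuinely different route from the paper. The paper reduces from {\sc ind set-d} restricted to cubic graphs: applicants correspond to vertices, courses to vertices and edges, each vertex-course has its three incident edge-courses as common prerequisites, and an independent set of size $K$ corresponds to a matching of size $m+K$ (every edge-course is always matchable, and each additional vertex-course matched forces the corresponding applicant to own all three of its edges). You instead reduce directly from \xc, with one applicant of capacity $4$ per triple, a set-course whose three prerequisites are its element-courses, and threshold $K=4n$; the key inequality $|M|\leq 3n+|S|$ (where $S$ indexes the matched set-courses) is a clean counting argument that forces $|S|=n$ and disjointness, hence an exact cover. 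Both constructions achieve exactly the stated restrictions (applicant capacity $4$, course capacity $1$, prerequisites common to all applicants), and your choice of source problem has the aesthetic advantage of reusing \xc, which the paper already employs for Theorem \ref{t_POM}. One thing worth noting: the paper's Corollary \ref{t_MAX-POM-CAPR-hard} piggybacks on the same construction by observing that the matching built in the forward direction is Pareto optimal (it is the output of {\sf SM-CAPR} under a suitable policy); your construction would serve that purpose equally well, since each cover-applicant receives her unique lexicographically best feasible bundle and every element-course is saturated, but you would need to add that observation explicitly to recover the corollary.
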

\begin{proof}
Clearly {\sc max capr-d} is in NP.  To show NP-hardness, we reduce from {\sc ind set-d} in cubic graphs; here {\sc ind set-d} is the decision version of {\sc ind set}, the problem of finding a maximum independent set in a given graph.  {\sc ind set-d} is NP-complete in cubic graphs \cite{GJS76,MS77}. Let $\langle G,K\rangle$ be an instance of {\sc ind set-d} in cubic graphs, where $G=(V,E)$ is a cubic graph and $K$ is a positive integer.  Assume that $V=\{v_1,v_2,\dots,v_n\}$ and $E=\{e_1,e_2,\dots,e_m\}$.  For a given vertex $v_i\in V$, let $E_i\subseteq E$ denote the set of edges incident to $v_i$ in $G$.  Clearly $|E_i|=3$ as $G$ is cubic.

We form an instance $I$ of {\sc max capr-d} as follows.  Let $A$ be the set of applicants and let $V\cup E$ be the set of courses, where $A=\{a_i : v_i\in V\}$ (we use the same notation for the vertices and edges in $G$ as we do for the courses in $I$, but no ambiguity should arise.)  
Let the capacity of each applicant be 4 and the capacity of each course be 1.
The preference list of each applicant is as follows:
\[a_i : v_i ~~ [E_i] ~~~~ ~~~~ (1\leq i\leq n)\]
where the symbol $[E_i]$ denotes all members of $E_i$ listed in arbitrary order.  
For each $v_i\in V$ and for each $e_j\in E_i$, define the prerequisite $v_i\rightarrow e_j$ for all applicants.  We claim that $G$ has an independent set of size at least $K$ if and only if $I$ has a matching of size at least $m+K$.

For, suppose that $S$ is an independent set in $G$ where $|S|\geq K$.  Let $A'=\{a_i\in A : v_i\in S\}$.  We form an assignment $M$ in $I$ as follows.  For each applicant $a_i\in A'$, assign $a_i$ to $v_i$ plus the prerequisite courses in $E_i$.  Then for each applicant $a_i\notin A'$, assign $a_i$ to any remaining courses in $E_i$ (if any).  It is straightforward to verify that $M$ is a matching in $I$.  Also 
$|M|=m+|S|\geq m+K$, since every applicant $a_i\in A'$ obtains $v_i$ and all prerequisite courses in $E_i$, and then the applicants in $A\backslash A'$ are collectively assigned to all remaining unmatched courses in $E$.

Conversely suppose that $M$ is a matching in $I$ such that $|M|\geq m+K$.
Let $S$ denote the courses in $V$ that are matched in $M$, and suppose that $|S|<K$.  Then since $|E|=m$ and all courses have capacity 1, $M\leq |S|+|E|\leq m+|S|<m+K$, a contradiction.  Hence $|S|\geq K$.  We now claim that $S$ is an independent set in $G$.  For, suppose that $v_i$ and $v_j$ are two adjacent vertices in $G$ that are both in $S$.  Clearly $(a_i,v_i)\in M$ and $(a_j,v_j)\in M$.  But since $v_i$ and $v_j$ are adjacent in $G$, it is then impossible for both $a_i$ and $a_j$ to meet their prerequisites on $v_i$ and $v_j$ in $I$, respectively, a contradiction.
\end{proof}
\begin{cor}\label{t_MAX-POM-CAPR-hard}
{\sc max pom capr-d} is NP-hard, even if each applicant has capacity 4 and each course has capacity 1, and the prerequisites are the same for all applicants.
\end{cor}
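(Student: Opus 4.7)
The plan is to reuse the reduction of Theorem \ref{t_MAX-CAPR-hard} from {\sc ind set-d} in cubic graphs: from an instance $\langle G,K\rangle$ we build the same instance $I$ of {\sc capr}, and claim that $G$ has an independent set of size at least $K$ if and only if $I$ admits a POM of size at least $m+K$. One direction is immediate, since a POM is a matching: a POM of size $\geq m+K$ yields a matching of the same size in $I$, which by Theorem \ref{t_MAX-CAPR-hard} gives an independent set of size $\geq K$ in $G$.

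For the other direction I would not start from an arbitrary independent set of size $K$, but from a \emph{maximum} (hence maximal) independent set $S$ with $|S|\geq K$. I then build $M$ essentially as in Theorem \ref{t_MAX-CAPR-hard}: assign $\{v_i\}\cup E_i$ to each $a_i\in A':=\{a_i:v_i\in S\}$, and distribute the ``free'' edges (those with both endpoints outside $S$) among $A\setminus A'$. The refinement is to carry out this last step by running the sequential mechanism on the auxiliary {\sc ca} sub-instance whose applicants are $A\setminus A'$, whose courses are the free edges (each of capacity $1$), whose capacities are inherited, and whose preference lists are the original ones restricted to the free edges of $E_j$. This sub-instance has \emph{no} prerequisite constraints, so feasible bundles are trivially closed under subsets; therefore, by the classical result recalled in the introduction \cite{CEFMMP14}, the sequential mechanism outputs a matching $M|_{A\setminus A'}$ that is a POM of the sub-instance. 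A counting argument identical to that of Theorem \ref{t_MAX-CAPR-hard} gives $|M|=m+|S|\geq m+K$.

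It remains to verify that $M$ is a POM in the full instance $I$. Suppose for contradiction that $M'$ Pareto-dominates $M$. For each $a_i\in A'$, the bundle $M(a_i)=\{v_i\}\cup E_i$ coincides with her entire acceptable set, so by strictness of the lexicographic order any $M'(a_i)\neq M(a_i)$ would force $a_i$ to strictly prefer $M$ to $M'$; hence $M'(a_i)=M(a_i)$, which locks every edge in $\bigcup_{v_i\in S}E_i$ to $A'$ in $M'$. Maximality of $S$ implies that every $v_j\notin S$ has a neighbour $v_i\in S$, and the resulting edge $(v_i,v_j)\in E_j\cap E_i$ is locked; hence $v_j\notin M'(a_j)$, for otherwise the prerequisite $E_j\subseteq M'(a_j)$ would fail. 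Consequently $M'|_{A\setminus A'}$ is a matching of the free-edge sub-instance that weakly Pareto-dominates $M|_{A\setminus A'}$, with at least one $a_j\in A\setminus A'$ strictly preferring $M'(a_j)$ to $M(a_j)$, contradicting that $M|_{A\setminus A'}$ is a POM of the sub-instance.

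The main delicate point is exactly this ``projection'' of Pareto dominance onto the sub-instance: it relies critically on (i) strictness of the lexicographic order, which freezes every $a_i\in A'$ at her entire preference list, and (ii) maximality of $S$, which prevents any $a_j\notin A'$ from acquiring $v_j$ in $M'$ because some prerequisite edge in $E_j$ is already committed to $A'$. With these two ingredients the corollary follows immediately: $G$ has an independent set of size at least $K$ if and only if $I$ admits a POM of size at least $m+K$, so NP-hardness transfers from {\sc ind set-d} in cubic graphs, under the same structural restrictions (applicant capacity $4$, course capacity $1$, common prerequisites) stated in Theorem \ref{t_MAX-CAPR-hard}.
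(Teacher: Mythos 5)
Your proposal is correct, but it reaches Pareto optimality by a genuinely different route than the paper. The paper's proof is a two-line appeal to already-established machinery: it orders the applicants so that those in $A'$ precede those in $A\setminus A'$, runs {\sf SM-CAPR} under that policy, and invokes Theorem \ref{thm_alg} to conclude that the resulting matching is a POM of size at least $m+K$; the converse direction and the counting are inherited verbatim from Theorem \ref{t_MAX-CAPR-hard}. You instead construct the matching explicitly and verify Pareto optimality by hand, via a ``lock and project'' argument: strictness of the lexicographic order pins every $a_i\in A'$ to her full acceptable set $\{v_i\}\cup E_i$ in any dominating $M'$, maximality of $S$ then blocks every $a_j\notin A'$ from acquiring $v_j$, and dominance projects onto the prerequisite-free sub-instance on the free edges, where the classical characterisation from \cite{CEFMMP14} applies. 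Both arguments are sound. The paper's version is shorter and needs no maximality assumption (an arbitrary independent set of size $K$ suffices, since {\sf SM-CAPR} certifies Pareto optimality of whatever it outputs); yours is self-contained modulo the no-prerequisites case, makes transparent exactly why the constructed matching is undominated, and correctly identifies the two ingredients (strict lexicographic preferences and maximality of $S$) without which the direct construction could fail to be Pareto optimal --- indeed, if some $v_j$ had no neighbour in $S$, applicant $a_j$ could potentially be improved by handing her $\{v_j\}\cup E_j$, and your projection step would break because $M'(a_j)$ would contain a course outside the sub-instance. The small cost of your approach is the extra bookkeeping needed to check that the projected $M'|_{A\setminus A'}$ is a legitimate dominating matching of the sub-instance (that its bundles consist only of free edges, and that the restricted preference comparisons agree with the original ones), all of which you handle correctly.
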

\begin{proof}
In the proof of Theorem \ref{t_MAX-CAPR-hard}, the matching $M$ in $I$ constructed from an independent set $S$ in $G$ is in fact Pareto optimal.  To see this, let $\sigma$ be an ordering of the applicants such that every applicant in $A'$ precedes every applicant in $A\backslash A'$.  Let $M$ be the result of running Algorithm {\sf SM-CAPR} relative to the ordering $\sigma$.  It follows by Theorem \ref{thm_alg} that $M$ is a POM in $I$.  The remainder of the proof of Theorem \ref{t_MAX-CAPR-hard} can then be used to show that {\sc max pom capr-d} is NP-hard.
\end{proof}

We now give an alternative proof of Corollary \ref{t_MAX-POM-CAPR-hard} for the case that the constructed matching is required to be Pareto optimal.  This gives NP-hardness for stronger restrictions on prerequisites than those given by Corollary \ref{t_MAX-POM-CAPR-hard}.  Our reduction involves a transformation from the following NP-complete problem \cite{BKS03}:\\

\begin{minipage}{5.5in}
\emph{Name:} \sat\\
\emph{Instance:} a Boolean formula $B$, where each clause in $B$ has size three, and each variable occurs exactly twice as an unnegated literal and exactly twice as a negated literal in $B$.\\
\emph{Question:} is $B$ satisfiable?
\end{minipage}
\smallskip

\begin{thm}\label{t_hard3}
{\sc max pom capr-d} is NP-hard, even if each course has at most one prerequisite, and the prerequisites are the same for all applicants.
\end{thm}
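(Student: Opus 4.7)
The plan is to prove NP-hardness via a reduction from \sat. Given an instance $B$ with $n$ variables and $m$ clauses, each variable occurring exactly twice positively and twice negatively, I will build a {\sc capr} instance $I$ with a threshold $K$ such that $B$ is satisfiable if and only if $I$ admits a POM of cardinality at least $K$. The two restrictions -- that each course has at most one prerequisite and that prerequisites are common to all applicants -- rule out the three-prerequisite device underlying the proof of Corollary \ref{t_MAX-POM-CAPR-hard}, so a new gadget built only from chain-shaped prerequisites is needed.

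First I would construct, for each variable $x_i$, a variable gadget consisting of a single applicant $a_i$ together with two parallel prerequisite chains: a ``T-chain'' with an anchor course $t_i$ and two literal-occurrence courses (one for each of the two positive appearances of $x_i$) whose single prerequisite is $t_i$, and an analogous ``F-chain'' for the two negative appearances. The capacity of $a_i$ and her strict preference list would be chosen so that, in any POM, $a_i$'s bundle is either the full T-chain or the full F-chain and therefore encodes a truth value for $x_i$. For each clause $C_j$ I would introduce a clause applicant $b_j$ and three ``clause courses'', each with a single prerequisite, namely the literal-occurrence course of the corresponding literal of $C_j$. Ranking the clause courses above their prerequisites in $b_j$'s preference list, $b_j$'s lex-top feasible bundle consists of a clause course together with its (chain) down-set; this bundle is realizable if and only if at least one literal of $C_j$ is free in some variable gadget -- equivalently, iff the assignment encoded by the variable gadgets satisfies $C_j$. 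Setting $K$ equal to the sum of all applicants' capacities ties the condition MAX POM $\geq K$ to the matching of every applicant at full capacity.

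For the forward direction, given a satisfying assignment $\tau$ I would construct a natural matching placing each $a_i$ on the $\tau(x_i)$-chain and matching each $b_j$ via a $\tau$-satisfied literal of $C_j$. Then, as in the proof of Corollary \ref{t_MAX-POM-CAPR-hard}, I would invoke Algorithm {\sf SM-CAPR} with a policy that processes each variable applicant (in an order consistent with $\tau$) before all clause applicants; Theorem \ref{thm_alg} guarantees that the output is a POM, and by construction its cardinality is at least $K$. For the reverse direction, any POM of size at least $K$ forces every applicant to receive a full-capacity bundle: the feasibility analysis of the variable gadget then forces each $a_i$ onto a pure T- or F-chain, yielding an assignment $\tau$, and the fact that each $b_j$ occupies her full preferred bundle certifies that every clause of $B$ is satisfied by $\tau$.

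The main obstacle will be the design of the variable gadget so that, in any matching of size $K$, $a_i$'s bundle must be exactly the T-chain or the F-chain, rather than a ``mixed'' bundle of the same cardinality drawing courses from both chains. Under the joint restrictions on prerequisites, the natural parallel-chain construction admits feasible mixed bundles of the full capacity of $a_i$, so the delicate step is to calibrate the capacities of the anchors and of the literal-occurrence courses against $a_i$'s capacity and preferences, so that any mixed choice either violates capacity through an additional chain-prerequisite obligation or else shortfalls the clause applicants so that no matching of size $K$ is achievable. Verifying this counting step rigorously -- and in particular confirming that mixed bundles cannot ``cover'' an unsatisfiable formula -- is the crux of the reduction.
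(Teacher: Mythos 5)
Your high-level strategy matches the paper's: a reduction from \sat\ with occurrence courses for the four appearances of each variable, a cardinality threshold equal to the total applicant capacity, and the forward direction discharged by running {\sf SM-CAPR} under a suitable policy. But the proposal has a genuine gap, and you have located it yourself without closing it: nothing in your construction forces the variable gadget to encode a consistent truth value. The problem is in fact worse than the ``mixed bundle'' issue you flag. A single applicant $a_i$ with a strict (lexicographic) preference list over two parallel chains does not encode a free binary choice at all: in any POM her bundle is essentially her lex-best feasible bundle among what competition leaves available, so absent competition she deterministically grabs whichever chain her list ranks first, and the reduction cannot make her choice track an arbitrary satisfying assignment. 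Calibrating capacities, as you propose, does not repair this, because the difficulty is not that mixed bundles have the right cardinality but that lexicographic preferences leave no degree of freedom for one applicant acting alone.

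The paper resolves this with a different gadget: for each variable $v_i$ there are \emph{two} guard applicants $g_i^1,g_i^2$, both of capacity $2$, both ranking a dedicated pair $y_i^1,y_i^2$ (linked by the single prerequisite $y_i^1\to y_i^2$) above the two occurrence courses of their respective literal ($x_i^1,x_i^2$ for $g_i^1$, $\bar x_i^1,\bar x_i^2$ for $g_i^2$). Since each $y$-course has capacity $1$, exactly one guard can win the $Y$-pair, and Pareto optimality plus the full-cardinality requirement $m+4n$ force the loser to occupy \emph{both} occurrence courses of its literal, blocking them from the clause applicants; the free bit is \emph{which guard wins}, not a choice internal to one preference list. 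Clause applicants then have capacity $1$ and simply rank the three occurrence courses of their clause directly -- no clause courses with occurrence-course prerequisites are needed, which also avoids the further complication in your construction that a clause applicant and a variable applicant would both need to occupy the same occurrence course and anchor. To make your proof work you would need to replace your single-applicant parallel-chain gadget with some such competition device; as written, the reverse direction (POM of size $K$ implies satisfiability) cannot be completed.
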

\begin{proof}
We firstly remark that, in view of Theorem \ref{t_POM}, it is not known whether {\sc max pom capr-d} belongs to NP.  We show NP-hardness for this problem via a reduction from \sat\ as defined above.

Let $B$ be an instance of \sat, where $V=\{v_1,v_2,\dots,v_n\}$ is the set of variables and $C=\{c_1,c_2,\dots,c_m\}$ is the set of clauses.  We construct an instance $I$ of {\sc max pom capr-d} as follows.  Let $X\cup Y$ be the set of courses, where $X=\{x_i^1,x_i^2,\bar{x}_i^1,\bar{x}_i^2 : 1\leq i\leq n\}$ and $Y=\{y_i^1,y_i^2 : 1\leq i\leq n\}$.  The courses in $X$ correspond to the first and second occurrences of $v_i$ and $\bar{v}_i$ in $B$ for each $i$ ($1\leq i\leq n$).  Let $A\cup G$ be the set of applicants, where $A=\{a_j : 1\leq j\leq m\}$ and $G=\{g_i^1,g_i^2 : 1\leq i\leq n\}$.  Each course has capacity 1.  Each applicant in $A$ has capacity 1, whilst each applicant in $G$ has capacity 2.  For each $i$ ($1\leq i\leq n$), define the prerequisite $y_i^1\to y_i^2$, which is the same for all applicants.  For each $j$ ($1\leq j\leq m$) and for each $s$ ($1\leq s\leq 3$), let $x(c_j^s)$ denote the $X$-course corresponding to the literal appearing at position $s$ of clause $c_j$ in $B$.  For example if the second position of clause $c_5$ contains the second occurrence of literal $\bar{v}_i$, then $x(c_5^2)=\bar{x}_i^2$.  The preference lists of the applicants are as follows:
\[
\begin{array}{rll}
P(a_j): & x(c_j^1),x(c_j^2) , x(c_j^3) & (1\leq j\leq m)\\
P(g_i^1): & y_i^1 , y_i^2 , x_i^1, x_i^2 & (1\leq i\leq n)\\
P(g_i^2): & y_i^1, y_i^2 , \bar{x}_i^1, \bar{x}_i^2 & (1\leq i\leq n)
\end{array}
\]
We claim that $B$ has a satisfying truth assignment if and only if $I$ has a POM of size $m+4n$.

For, suppose that $f$ is a satisfying truth assignment for $B$.  We form a matching $M$ in $I$ as follows.  For each $i$ ($1\leq i\leq n)$, if $f(v_i)$={\tt true} then add the pairs $(g_i^1,y_i^1)$, $(g_i^1,y_i^2)$, $(g_i^2,\bar{x}_i^1)$, $(g_i^2,\bar{x}_i^2)$ to $M$.  On the other hand if $f(v_i)$={\tt false} then add the pairs $(g_i^1,x_i^1)$, $(g_i^1,x_i^2)$, $(g_i^2,y_i^1)$, $(g_i^2,y_i^2)$ to $M$.  For each $j$ ($1\leq j\leq m$), at least one literal in $c_j$ is true under $f$.  Let $s$ be the minimum integer such that the literal at position $s$ of $c_j$ is true under $f$.  Course $x(c_j^s)$ is still unmatched by construction; add $(a_j,x(c_j^s))$ to $M$.  It may be verified that $M$ is a POM of size $m+4n$ in $I$.

Conversely suppose that $M$ is a POM in $I$ of size $m+4n$.  Then the cardinality of $M$ implies that every applicant is full in $M$.  We firstly show that, for each $i$ ($1\leq i\leq n)$, either $\{(g_i^1,y_i^1),(g_i^1,y_i^2)\}\subseteq M$ or $\{(g_i^2,y_i^1),(g_i^2,y_i^2)\}\subseteq M$.  Suppose this is not the case.  As a consequence of the prerequisites, if $(g_i^r,y_i^1)\in M$ for some $i$ ($1\leq i\leq n$) and $r\in \{1,2\}$, then $(g_i^r,y_i^2)\in M$.  Suppose now that $(g_i^r,y_i^2)\in M$ for some $r\in \{1,2\}$, but $(g_i^r,y_i^1)\notin M$.  Let $M'$ be the matching obtained from $M$ by removing any assignee of $g_i^r$ worse than $y_i^2$ (if such an assignee exists) and adding $(g_i^r,y_i^1)$ to $M$.  Then $M'$ dominates $M$, a contradiction.  Now suppose that $y_i^2$ is unmatched in $M$.  Let $M'$ be the matching obtained from $M$ by removing any assignee of $g_i^1$ worse than $y_i^2$ (if such an assignee exists) and adding $(g_i^1,y_i^r)$ to $M$ ($r\in \{1,2\}$).  Then $M'$ dominates $M$, a contradiction.  Thus the claim is established.  It follows that for each $i$ ($1\leq i\leq n$), either $g_i^1$ is matched in $M$ to two members of $X$ and $g_i^2$ is matched in $M$ to two members of $Y$, or vice versa.

Now create a truth assignment $f$ in $B$ as follows.  For each $i$ ($1\leq i\leq n$), if $(g_i^1,y_i^1)\in M$, set $f(v_i)$={\tt true}, otherwise set $f(v_i)$={\tt false}.  We claim that $f$ is a satisfying truth assignment for $B$.  For, let $j$ ($1\leq j\leq m$) be given.  Then $(a_j,x(c_j^s))\in M$ for some $s$ ($1\leq s\leq 3$).  If $x(c_j^s)=x_i^r$ for some $i$ ($1\leq i\leq n$) and $r$ ($r\in \{1,2\}$) then $f(v_i)=T$ by construction.  Similarly if $x(c_j^s)=\bar{x}_i^r$ for some ($1\leq i\leq n$) and $r$ ($r\in \{1,2\}$) then $f(v_i)=F$ by construction.  Hence $f$ satisfies $B$. 
\end{proof}

The next example shows that the difference in cardinalities between POMs may be arbitrary, and that {\sf SM-CAPR} is not in general a constant-factor approximation algorithm for {\sc max pom capr}. 

\begin{example} Consider a {\sc capr} instance $I$ in which $A=\{a_1,a_2\}$ and $C=\{c_1,c_2,\dots,c_n\}$ for some $n\geq 1$. Let the preferences of the applicants be
$$P(a_1): c_1, c_2,\dots, c_n \qquad  P(a_2): c_n$$
and let $c_i\rightarrow c_{i+1}$ for each applicant ($1\leq i\leq n-1$).  Assume that $a_1$ has capacity $n$, whilst the capacity of $a_2$  and the capacity of every course is 1.
 
There are two POMs in $I$: if {\sf SM-CAPR} is executed relative to a policy in which $a_1$ is first then we obtain the POM $M_1$ that assigns all the $n$ courses to $a_1$ and nothing to $a_2$; if instead $a_2$ is first, we obtain the POM $M_2$ that assigns nothing to $a_1$ and the single course $c_n$ to $a_2$.  Hence executing {\sf SM-CAPR} relative to different policies can give rise to POMs with arbitrarily large difference in cardinality.  It follows that {\sf SM-CAPR} is not in general a constant-factor approximation algorithm for {\sc max pom capr}. However, notice that in this example the cardinality of the down-set of each course is not bounded by a constant; enforcing such a condition could improve the approximation possibilities.   \qed
\end{example}

\section{Alternative prerequisites}\label{s_alt}
In this section we turn our attention to {\sc caapr}, the analogue of {\sc capr} in which prerequisites need not be compulsory but may be presented as alternatives.  We will show that, in contrast to the case for {\sc capr}, finding a POM is NP-hard, under either additive or lexicographic preferences.

Recall that as {\sc capr} is a special case of {\sc caapr}, Lemma \ref{t_add} implies that finding a most-preferred bundle of courses under additive preferences is NP-hard.  Now we prove a similar result for lexicographic preferences.  

\begin{lem}\label{t_altern}
The problem of finding the most-preferred feasible bundle of courses of a given applicant with lexicographic preferences in {\sc caapr} is NP-hard.
\end{lem}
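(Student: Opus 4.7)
The plan is to reduce from \textsc{Set Cover}: given a universe $U=\{u_1,\dots,u_N\}$, a family $\mathcal{S}=\{S_1,\dots,S_M\}$ of subsets of $U$, and an integer $k$, decide whether there exists $\mathcal{C}\subseteq\mathcal{S}$ with $|\mathcal{C}|\le k$ and $\bigcup\mathcal{C}=U$. Given such an instance, I will construct a \textsc{caapr} instance with a single applicant $a_1$ whose lexicographic most-preferred feasible bundle reveals the answer.

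The construction gives $a_1$ capacity $q=N+k$. It introduces an \emph{element course} $[u_i]$ for every element $u_i\in U$ and a \emph{set course} $[S_j]$ for every $S_j\in\mathcal{S}$. The alternative-prerequisite map is $[u_i]\mapsto_{a_1}\{[S_j]:u_i\in S_j\}$, so that including an element course in a bundle forces the presence of at least one covering set. Set courses carry no prerequisites, so the graph $G_{a_1}$ is trivially acyclic. Applicant $a_1$'s preference list ranks the elements first in the order $[u_1]\succ[u_2]\succ\dots\succ[u_N]$, followed by the set courses in arbitrary order.

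Let $B^{\star}$ denote the most-preferred feasible bundle for $a_1$. The core claim is that $\{[u_1],\dots,[u_N]\}\subseteq B^{\star}$ if and only if the given \textsc{Set Cover} instance is a yes-instance. For the ``only if'' direction, if every element course lies in $B^{\star}$ then, since $|B^{\star}|\le q=N+k$, at most $k$ further courses can be present; these must be set courses, and by the alternative-prerequisite constraint they must jointly cover $U$, yielding a cover of size at most $k$. For the ``if'' direction, given a cover $\mathcal{C}$ of size at most $k$, the bundle $B^{\circ}=\{[u_i]:1\le i\le N\}\cup\{[S]:S\in\mathcal{C}\}$ is feasible and has size at most $N+k=q$. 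Crucially, the same $B^{\circ}$ witnesses feasibility for every prefix of the element courses, so at each step of the lexicographic inclusion the existence of such a feasible extension is guaranteed and every $[u_i]$ ends up in $B^{\star}$.

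The main obstacle is precisely this lex-propagation step: one must verify that in a yes-instance the lex mechanism does not reject some $[u_i]$ at an intermediate position because of a capacity squeeze induced by earlier commitments. The universal witness $B^{\circ}$ dispels this concern, since the lex decision at each position depends only on the existence of \emph{some} feasible bundle matching the current commitments and $B^{\circ}$ always provides one. Since checking $\{[u_1],\dots,[u_N]\}\subseteq B^{\star}$ is trivial given $B^{\star}$, any polynomial-time algorithm producing $B^{\star}$ would decide \textsc{Set Cover} in polynomial time, establishing the claimed NP-hardness.
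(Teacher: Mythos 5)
Your proof is correct, and it shares the paper's core idea: encode an NP-hard covering problem via alternative prerequisites (element/edge courses point to the sets/vertices that can cover them) and use the single applicant's capacity to enforce the budget. Where you diverge is in how lexicographic optimality is exploited. The paper reduces from \textsc{vertex cover} and adds a gadget: a top-ranked course $b$ with a chain of compulsory prerequisites $b\to e_1\to\cdots\to e_m$ (singleton alternative sets), so that the whole question collapses to ``does the lex-optimal bundle contain the single most-preferred course $b$?'' --- which holds iff \emph{some} feasible bundle contains $b$, making the lexicographic argument one line. You instead place all $N$ element courses at the head of the preference list and must prove the propagation claim that in a yes-instance each $[u_i]$ is admitted in turn; you correctly identify this as the crux and discharge it with the universal witness $B^{\circ}$, which certifies every greedy inclusion step simultaneously (and the converse direction is the same capacity-counting argument in both proofs). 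The trade-off: the paper's chain gadget buys a trivial lex argument at the cost of an extra gadget course and a longer prerequisite structure; your version needs no auxiliary course or compulsory chain but requires the (correctly handled) inductive lex-propagation lemma. Both reductions are valid; just make sure to note the standard assumption that every element of $U$ belongs to at least one set, since an uncovered $u_i$ would have $[u_i]\mapsto_{a_1}\emptyset$ and could never be included (that case is a trivial no-instance anyway).
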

\begin{proof} 
We reduce from {\sc vc-d},  the decision version of {\sc vc},  which is the problem of finding a vertex cover of minimum size in a given graph. {\sc vc-d} is NP-complete \cite{GJ79}.
Let $\langle G,K\rangle$ be an instance of {\sc vc-d}, where $G=(V,E)$ is a graph and $K$ is a positive integer.  Assume that $V=\{v_1,v_2,\dots,v_n\}$ and $E=\{e_1,e_2,\dots,e_m\}$.
We construct an instance $I$ of {\sc caapr} as follows.  Let the set of courses be $V\cup E\cup \{b\}$ (again, we use the same notation for vertices and edges in $G$ as we do for courses in $I$, but no ambiguity should arise.)  There is a single applicant $a_1$ with capacity $m+K+1$ whose preference list is as follows:
$$P(a_1): b, e_1,e_2,\dots,e_m,v_1,v_2,\dots,v_n.$$
Course $b$ has a single compulsory prerequisite course $e_1$, and each course $e_j$ $(2\leq j\leq m-1)$ has a single compulsory prerequisite course $e_{j+1}$. Moreover,
all the  $E$-courses have (alternative) prerequisites; namely, for any $j$ ($1\leq j\leq m$), if course $e_j$ corresponds to the edge $e_j=\{v_i,v_k\}$ then $e_j\mapsto_{a_1} \{v_i,v_k\}$. We claim that $G$ admits a vertex cover of size at most $K$ if and only if $I$ admits a matching in which $a_1$ is assigned course $b$.

For, suppose that $G$ admits a vertex cover $S$ where $|S|\leq K$.  Form a matching $M$ by assigning $a_1$ to the bundle $B=\{b\}\cup E\cup S$.  Then $B$ is a feasible bundle of courses for $a_1$, and $b\in B$.

Conversely, suppose $I$ admits a matching $M$ in which $a_1$ is assigned a bundle $B$ containing course $b$. Then, due to the prerequisites, $B$ must contain all $E$-courses and for each course in $e_j\in E$, $B$ must contain some course in $v_i\in V$ that corresponds to a vertex incident to $e_j$.  Let $S=B\cap V$.  Clearly $S$ is a vertex cover in $G$, and as $q(a_1)=m+K+1$, it follows that $|S|\leq K$.

A polynomial-time algorithm for finding the most-preferred bundle of courses for $a_1$ can then be used to decide whether $I$ admits a matching in which $a_1$ is assigned $b$, hence the result.
\end{proof}

As noted in Section \ref{sec:comp-add}, in the case of just one applicant $a_1$, a matching $M$ is a POM if and only if $a_1$ is assigned in $M$ her most-preferred feasible bundle of courses. Hence Lemma \ref{t_altern} implies the following assertion. 

\begin{thm}
Given an instance of {\sc caapr} the problem of finding a POM is NP-hard.  The result holds under either additive or lexicographic preferences.
\end{thm}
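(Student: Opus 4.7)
The plan is to derive both NP-hardness statements as essentially immediate corollaries of Lemmas \ref{t_add} and \ref{t_altern}, via the same single-applicant reduction used for Theorem \ref{thm:findPOMcapr}. The key observation, already flagged in the paragraph immediately preceding the theorem, is that in a {\sc caapr} instance $J$ with only one applicant $a_1$, a matching $M$ is a POM if and only if $M(a_1)$ is a most-preferred feasible bundle of courses for $a_1$. Indeed, if $M(a_1)$ is not most-preferred, let $B$ be a most-preferred feasible bundle; then the matching $M'$ defined by $M'(a_1) = B$ is a valid matching in $J$ (feasibility of $B$ together with the course capacity constraints being trivially satisfied when $|A|=1$ and, in the reductions used, course capacities are set to be at least $1$) and strictly dominates $M$. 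Conversely, if $M(a_1)$ is most-preferred then no other matching can make $a_1$ strictly better off, so $M$ is a POM.

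I would then handle the two preference models separately but symmetrically. For lexicographic preferences, I would take an arbitrary instance $I$ of {\sc vc-d} and invoke exactly the one-applicant {\sc caapr} instance $J$ constructed in the proof of Lemma \ref{t_altern}. By the observation above, any polynomial-time algorithm that outputs a POM of $J$ delivers a most-preferred feasible bundle $B$ for $a_1$; checking whether $b \in B$ then decides {\sc vc-d}, giving NP-hardness.

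For additive preferences, I would use the fact that {\sc capr} is syntactically a special case of {\sc caapr}: a compulsory prerequisite $c_j \to_{a_i} c_k$ is encoded as an alternative prerequisite $c_j \mapsto_{a_i} \{c_k\}$ (a singleton set of alternatives), and a feasible bundle in the {\sc capr} sense is then exactly a feasible bundle in the {\sc caapr} sense. Applying this embedding to the one-applicant instance $J$ built in the proof of Lemma \ref{t_add} from a {\sc knapsack} instance yields a {\sc caapr} instance with additive preferences whose unique applicant achieves utility at least $P$ in some matching if and only if the original {\sc knapsack} instance is a yes-instance. Combined with the POM characterisation above, any polynomial-time algorithm for finding a POM in {\sc caapr} with additive preferences would solve {\sc knapsack}, yielding NP-hardness.

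There is no real obstacle here; the theorem is essentially a packaging of the two lemmas with the single-applicant POM characterisation. The only point requiring slight care is to verify that the {\sc capr}-to-{\sc caapr} embedding preserves feasibility exactly, and that in each reduction the matching assigning $a_1$ to the most-preferred feasible bundle is indeed a valid matching (i.e., each course's capacity $1$ is respected, which holds because $|A|=1$).
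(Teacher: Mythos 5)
Your proposal is correct and follows essentially the same route as the paper: the single-applicant observation that a matching is a POM if and only if the applicant receives a most-preferred feasible bundle, combined with Lemma \ref{t_altern} for the lexicographic case and the embedding of {\sc capr} (hence Lemma \ref{t_add}) into {\sc caapr} via singleton alternative sets for the additive case. The paper states this more tersely, but the content is identical.
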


We finally remark that, since {\sc capr} is a special case of {\sc caapr}, Theorem \ref{t_POM} implies that the problem of determining whether a given matching $M$ in an instance of {\sc caapr} is a POM is co-NP-complete for lexicographic preferences (and also for additive preferences by the remark following Theorem \ref{t_POM}).

\section{Corequisites}\label{s_coreq}
In this section we focus on {\sc cacr}, the extension of {\sc ca} involving corequisite courses.  As in the case of {\sc capr}, we will show that finding a POM in the presence of additive preferences is NP-hard.  Thus the majority of our attention is concerned with lexicographic preferences.  In this case we show how to modify the sequential mechanism 
in order to obtain a polynomial-time algorithm for finding a POM in the {\sc cacr} case.  Moreover we show that in {\sc cacr}, the problem of finding a maximum cardinality POM is very difficult to approximate.

We begin with additive preferences.  A simple modification of the proof of Lemma \ref{t_add} (ensuring that, for each $i$ ($1\leq i\leq n$), $c_i\leftrightarrow d_i^r$ for each $r$ ($1\leq r\leq w_i-1$)) gives the following analogue of Theorem \ref{thm:findPOMcapr}.
\begin{thm}\label{t_add_co} 
Given an instance of {\sc cacr} with additive preferences, the problem of finding a POM is NP-hard.
\end{thm}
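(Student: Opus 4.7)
The plan is to adapt the \textsc{knapsack} reduction from Lemma \ref{t_add} in the straightforward way suggested by the remark preceding the theorem: keep the single-applicant construction intact, but replace the directed prerequisite relation on each gadget by a symmetric corequisite relation that binds $c_i$ together with its $d_i^r$ companions into one equivalence class.

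Concretely, given a \textsc{knapsack} instance $(w_1,\dots,w_n,p_1,\dots,p_n,W,P)$, I would build a \textsc{cacr} instance $J$ with a single applicant $a_1$ of capacity $q(a_1)=W$, course set $C\cup D$ with $C=\{c_1,\dots,c_n\}$ and $D=\bigcup_{i=1}^n\{d_i^1,\dots,d_i^{w_i-1}\}$, every course of capacity $1$ (irrelevant since there is only one applicant, but it keeps the instance well-formed), and the same additive utilities as in Lemma \ref{t_add}: $u_{a_1}(c_i)=p_i+\delta_i$ and $u_{a_1}(d_i^r)=\varepsilon_i^r$, with the $\delta_i,\varepsilon_i^r$ chosen positive, distinct, and summing to less than $1$ (this fixes a strict preference order and keeps the total perturbation below one unit of profit). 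The only change is to the feasibility relation: define $c_i\leftrightarrow d_i^r$ for each $i$ and each $r$ ($1\le r\le w_i-1$), and take $\leftrightarrow$ to be the reflexive/symmetric/transitive closure of these pairs. The resulting equivalence classes are exactly the sets $E_i=\{c_i,d_i^1,\dots,d_i^{w_i-1}\}$ of size $w_i$, plus singletons for any isolated courses (here there are none).

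The correspondence is then immediate: a feasible bundle for $a_1$ must, by the definition of \textsc{cacr} feasibility, be a union of equivalence classes of $\leftrightarrow$; that is, it is determined by a set $K\subseteq\{1,\dots,n\}$ via $B_K=\bigcup_{i\in K}E_i$. The capacity constraint $|B_K|\le q(a_1)=W$ translates to $\sum_{i\in K}w_i\le W$, and the utility of $B_K$ is $\sum_{i\in K}p_i$ plus a perturbation strictly less than $1$. Since the $p_i$ are integers, $u_{a_1}(B_K)\ge P$ if and only if $\sum_{i\in K}p_i\ge P$. Hence $a_1$ has a feasible bundle of utility at least $P$ in $J$ if and only if the \textsc{knapsack} instance is a yes-instance, which establishes that finding a most-preferred feasible bundle in \textsc{cacr} with additive preferences is NP-hard.

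To conclude the theorem, I would invoke the same observation used after Lemma \ref{t_add}: with a single applicant, a matching is a POM precisely when it assigns $a_1$ a most-preferred feasible bundle, since otherwise reassigning $a_1$ to that bundle yields a dominating matching. Therefore a polynomial-time algorithm for finding a POM in \textsc{cacr} with additive preferences would solve \textsc{knapsack}. There is no real obstacle here — the argument is essentially a transliteration of Lemma \ref{t_add} — the only point requiring a sentence of care is verifying that corequisites, being all-or-nothing equivalence classes, enforce exactly the same choice structure as the chains of compulsory prerequisites did in the original reduction.
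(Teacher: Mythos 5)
Your proposal is correct and is exactly the paper's intended argument: the paper proves Theorem~\ref{t_add_co} by the one-line remark that the {\sc knapsack} reduction of Lemma~\ref{t_add} goes through after replacing the prerequisite chains by the corequisite classes $c_i\leftrightarrow d_i^r$, which is precisely what you have spelled out. Your additional observation that feasible bundles are now exactly unions of the equivalence classes $E_i$ (making the correspondence with subsets $K\subseteq\{1,\dots,n\}$ even cleaner than in the prerequisite case) is a correct and worthwhile piece of the verification.
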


In view of Theorem \ref{t_add_co}, in the remainder of this section we assume that preferences are lexicographic.  In this case we can find a POM efficiently by dealing with the corequisites as follows.  Let $I$ be an instance of {\sc cacr}, let $\sigma$ be a policy in $I$, and assume the notation defined in Section \ref{sec:coreq}.  We lose no generality by supposing that each applicant either finds all the courses in one equivalence class  $C^k$ acceptable, or none of them.  Replace all the courses in $C^k$ by a single \emph{supercourse} $c^k$ such that $q(c^k)=\min \{q(c_j) : c_j\in C^k\}$.
For any applicant $a_i\in A$ who finds all courses in $C^k$ acceptable, remove all such courses from $a_i$'s list and replace them by $c^k$; since preferences are lexicographic, the position of $c^k$ in the modified preference list of $a_i$ is the position of the most-preferred course of $C^k$ in her original list.  Let $I'$ denote the {\sc cacr} instance obtained from $I$ by using this transformation.

The sequential mechanism for {\sc cacr} can be executed on $I'$ as follows. The mechanism works according to a given policy $\sigma$ in stages. In one stage, the applicant $a_i$ who next has her turn according to $\sigma$ chooses her most-preferred supercourse $c^k$ to which she has not yet applied. Applicant $a_i$ is assigned to $c^k$ if two conditions are fulfilled: (i) the number of courses assigned to $a_i$ so far plus the cardinality of $C^k$ does not exceed $q(a_i)$, and (ii) each course $c_j\in C^k$ still has a free slot. If this is not possible, at the same stage $a_i$ applies to her next supercourse until she is either assigned some supercourse or her preference list is exhausted.  Once the whole process terminates, let $M'$ be the assignment of applicants to supercourses in $I'$ and construct the following assignment $M$ in $I$ from $M'$:
\begin{equation}
M=\{(a_i,c_j) : a_i\in A\wedge c_j\in C^k\wedge (a_i,c^k)\in M'\}. \label{eq}
\end{equation}

Let {\sf SM-CACR} denote the mechanism that constructs $M$ from $I$ and $\sigma$.  {\sf SM-CACR} always yields a POM and runs in polynomial time; the proof is very similar to that of Theorem \ref{thm_alg}.  In fact we can go a step further and generalise the mechanism to the case of {\sc cacpr}, the extension of {\sc ca} in which there are both prerequisite and corequisite courses (thus all prerequisites are defined in terms of supercourses).  Let us denote by {\sf SM-CACPR} the mechanism {\sf SM-CAPR} with the following  modifications:
\begin{itemize}
\item[(i)] $M$ should be replaced by $M'$ everywhere except in line \ref{line:return}, where $M'$ is a matching of applicants to supercourses;
\item[(ii)] every occurrence of ``course'' should be replaced by ``supercourse''; likewise $c_j$ (resp.\ $c_k$) should be replaced by $c^j$ (resp.\ $c^k$);
\item[(iii)] in line \ref{line:under}, ``$c_j$ is undersubscribed'' should be replaced by ``if each course in $c^j$ is undersubscribed'';
\item[(iv)] in line \ref{line:full}, ``$c_k$'' is full should be replaced by ``some course in $c^k$ is full'';
\item[(v)] in line \ref{line:quota}, $|M(a_i)|$ is interpreted as $\sum_{c^j\in M(a_i)} |C^j|$ and $|S|$ is interpreted as $\sum_{c^j\in S} |C^j|$;
\item[(vi)] after line \ref{line:penultimate}, $M$ should be obtained from $M'$ as per Equation \ref{eq}.
\end{itemize}
We can then arrive at the following result, whose proof is a straightforward extension of that of Theorem \ref{thm_alg} and is omitted.
\begin{thm}
\label{thm:cacpr}
Algorithm {\sf SM-CACPR} produces a POM for a given instance $I$ of {\sc cacpr} and for a given policy $\sigma$ in $I$.  The complexity of the algorithm is $O(N+n_2(L+\Delta))$, where $N$ is the sum of the applicants' capacities, $n_2$ is the number of courses, $L$ is the total length of the applicants' preference lists and $\Delta$ is the total number of immediate prerequisites of each course $c_j$ in $\to_{a_i}$, taken over each applicant $a_i$.
\end{thm}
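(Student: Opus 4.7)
The plan is to reduce the proof to Theorem~\ref{thm_alg} via the supercourse transformation described immediately before the theorem. Starting from the {\sc cacpr} instance $I$, I would form the {\sc capr} instance $I'$ by replacing each corequisite equivalence class $C^k$ with a single supercourse $c^k$ whose capacity is $\min\{q(c_j) : c_j \in C^k\}$, compressing each applicant's preference list so that $c^k$ appears at the position of her most-preferred course from $C^k$, and lifting the prerequisite relations from individual courses to supercourses. The simplifying assumption that each applicant either finds all courses in $C^k$ acceptable or none, together with the lexicographic nature of the preferences, ensures that this compression faithfully preserves the ordering on feasible bundles.

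I would then verify that executing {\sf SM-CACPR} on $I$ and $\sigma$ is operationally equivalent to executing the reinterpreted {\sf SM-CAPR} on $I'$ and $\sigma$, producing a matching $M'$ from which $M$ is recovered via Equation~\ref{eq}. The reinterpretations in (iii)--(v) are precisely what is needed so that a supercourse is considered available only when every one of its constituents is undersubscribed (thanks to the min-capacity convention), and so that assigning $c^k$ to $a_i$ charges $|C^k|$ units against her quota. Hence $M'$ is a legal matching in $I'$, and by construction $M$ is a matching in $I$ satisfying every corequisite, prerequisite, and capacity constraint.

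For Pareto optimality I would argue by contradiction: suppose some matching $N$ in $I$ dominates $M$. Because every feasible matching in $I$ assigns whole equivalence classes, $N$ canonically induces a matching $N'$ on supercourses in $I'$. Since each $c^k$ is ranked at the position of $C^k$'s most-preferred member, under lexicographic preferences the most-preferred course in the symmetric difference of two feasible bundles in $I$ lies in the same equivalence class as the most-preferred supercourse in the symmetric difference of the corresponding bundles in $I'$; hence $a_i$ prefers $N(a_i)$ to $M(a_i)$ in $I$ if and only if she prefers $N'(a_i)$ to $M'(a_i)$ in $I'$. So $N'$ would dominate $M'$, contradicting Theorem~\ref{thm_alg} applied to $I'$. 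This preference-preservation step is where I expect most of the care to be needed, as one must justify formally that the symmetric-difference test used in the lexicographic definition commutes with collapsing each $C^k$ to its top-ranked representative.

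For the complexity bound, the transformation from $I$ to $I'$ can be carried out in $O(L)$ time by one pass through each applicant's list, grouping together acceptable courses of a common equivalence class. The resulting instance has at most $n_2$ supercourses, total preference-list length at most $L$, total prerequisite-arc count at most $\Delta$, and sum of applicant capacities still $N$. Applying Theorem~\ref{thm_alg} to $I'$ therefore yields the claimed $O(N + n_2(L+\Delta))$ bound, and the expansion in Equation~\ref{eq} adds only $O(N)$ further work, which is absorbed.
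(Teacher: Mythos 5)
Your proposal is correct and follows exactly the route the paper intends: the paper omits this proof, stating only that it is a straightforward extension of Theorem~\ref{thm_alg} via the supercourse transformation, and your argument supplies precisely that extension (the correspondence of feasible bundles and of lexicographic comparisons under the collapse of each $C^k$ to its top-ranked representative, the transfer of any dominating matching to the supercourse level, and the complexity accounting). The one point of looseness is that Theorem~\ref{thm_alg} cannot be invoked as a black box on $I'$ --- a supercourse consumes $|C^k|$ units of applicant capacity, so $I'$ is not literally a {\sc capr} instance --- but your ``reinterpreted'' run of {\sf SM-CAPR} amounts to re-running that theorem's proof with the size modifications (items (iii)--(v)), and that proof goes through unchanged.
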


In the {\sc cacr} model as defined in Section \ref{sec:coreq}, corequisite constraints are common to all applicants.  In this setting, and after the modification described prior to Theorem \ref{thm:cacpr}, in which courses are merged into supercourses, {\sc cacr} becomes equivalent to {\sc cap}, the extension of {\sc ca} with price-budget constraints described in \cite{CEFMMP14}.  For an instance $I$ of {\sc cap}, it is known that for each POM $M$ in $I$, there exists a policy $\sigma$ such that executing {\sf SM-CACR} relative to $\sigma$ produces $M$ \cite[Theorem 3]{CEFMMP14}.
Our next example presents an observation about the behaviour of {\sf SM-CACR} if we extend it to the variant of {\sc cacr} in which corequisite constraints are specific to individual applicants.
\begin{example}
\label{ex:cacr}
The {\sf SM-CACR} mechanism can be extended without difficulty to the variant of {\sc cacr} (considered in this example only) in which corequisites can be applicant-specific.  However it is no longer true that the mechanism is capable of reaching all POMs relative to a suitable policy, as we now illustrate.  Consider a {\sc cacr} instance with two applicants and three courses.  Suppose that each applicant has capacity 2, and that each course has capacity 1.  Assume that the applicants have the following preference lists:
\[
\begin{array}{ll}
P(a_1): & c_1, c_2, c_3\\
P(a_2): & c_2, c_1, c_3
\end{array}
\]
Assume that each applicant has as corequisites the first and last courses on her list.  Then {\sf SM-CACR} will return the matching $\{(a_i,c_i),(a_i,c_3)\}$ if the first applicant in the policy is $a_i$ ($i\in \{1,2\}$).  However the matching $M=\{(a_1,c_2),(a_2,c_1)\}$ is also Pareto optimal and cannot be obtained by {\sf SM-CACR}. \qed
\end{example}



Given an instance of {\sc cap}, the problem of finding a maximum cardinality POM is NP-hard \cite[Theorem 7]{CEFMMP14}.  
Using the connection between {\sc cacr} and {\sc cap} described prior to Example \ref{ex:cacr}, the same is therefore true for {\sc max pom cacr}, the problem of finding a maximum cardinality POM, given an instance of {\sc cacr}.  We now strengthen this result by showing that {\sc max pom cacr} is very difficult to approximate.
\begin{thm} \label{t_cacr}
{\sc max pom cacr} is NP-hard and not approximable within a factor of $N^{1-\varepsilon}$, for any $\varepsilon>0$, unless P=NP, where $N$ is the total capacity of the applicants.
\end{thm}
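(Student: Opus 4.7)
The plan is to derive the two claims from a single reduction. The NP-hardness half is essentially immediate: by the equivalence described just before Example~\ref{ex:cacr} between {\sc cacr} (after merging each corequisite class into a supercourse) and the {\sc cap} model of \cite{CEFMMP14}, NP-hardness of {\sc max pom cacr} transfers directly from NP-hardness of {\sc max pom cap} established in \cite[Theorem~7]{CEFMMP14}. The real work lies in the stronger $N^{1-\varepsilon}$ bound.

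For the inapproximability I would reduce from maximum independent set, which by H\aa stad's theorem is not approximable within $n^{1-\delta}$ for any $\delta>0$ unless P=NP. Given a graph $G=(V,E)$ with $|V|=n$, I would build a {\sc cacr} instance $I$ as follows. Each vertex $v_i$ contributes an applicant $a_i$ together with a ``vertex supercourse'' $S_i$, realised as a corequisite class of unit-capacity courses of some small constant size. Each edge $e=(v_i,v_j)$ contributes a single unit-capacity ``conflict token'' $t_e$, placed in its own singleton equivalence class so that including it does \emph{not} merge $S_i$ with $S_j$ under the global corequisite relation of {\sc cacr}. Applicant $a_i$'s preference list is headed by $S_i$, followed by her incident conflict tokens $t_e$, and her capacity is chosen so that a feasible bundle either contains the entire $S_i$ together with \emph{all} of her incident tokens (thereby blocking every adjacent applicant from completing the analogous bundle), or contains essentially nothing of value. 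Conceptually, the construction is most cleanly carried out in the {\sc cap} formulation, where prices and budgets offer the flexibility to impose the ``all or nothing'' effect without relying on corequisite transitivity.

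Under this construction, a POM of $I$ will correspond to a choice of an independent set $S\subseteq V$ of ``winning'' applicants (those obtaining their full top bundle), with maximum POM size a monotone increasing function of $|S|$, while the total applicant capacity $N$ is linear in~$n$. Consequently a hypothetical $N^{1-\varepsilon}$-approximation to {\sc max pom cacr} would, since $N=\Theta(n)$, yield an $n^{1-\delta}$-approximation to maximum independent set for some $\delta<\varepsilon$, contradicting H\aa stad's theorem. The main obstacle I expect is ruling out spuriously large POMs arising from many applicants each partially filling their vertex supercourse without committing to the corresponding conflict tokens; I plan to handle this by invoking the {\sf SM-CACR}-based characterisation of {\sc cap}-POMs (cf.\ \cite[Theorem~3]{CEFMMP14}), which forces every POM to arise from a sequential execution and hence to commit to the conflict-token structure in a way consistent with the independent set interpretation.
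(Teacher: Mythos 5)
There is a genuine gap in the inapproximability half. The crux of any $N^{1-\varepsilon}$ bound is a \emph{multiplicative} gap between the maximum POM sizes of yes- and no-instances, and your construction does not create one. Under lexicographic preferences, Pareto optimality forces every course that is acceptable to some applicant with spare capacity and still has a free slot to be assigned (otherwise adding that pair is a Pareto improvement). In your graph gadget the conflict tokens sit in singleton corequisite classes and are acceptable to their two endpoint applicants, so even ``losing'' applicants will be forced to absorb them; every POM therefore carries an additive baseline of roughly $\min(m,\,\text{leftover capacity})$ assigned slots, and the ratio between the yes- and no-case POM sizes collapses to a constant rather than growing like $n^{1-\delta}$. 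Your proposed fix --- enforcing an ``all of $S_i$ plus all incident tokens, or nothing'' bundle --- cannot be realised by the tools you name: price--budget feasibility in {\sc cap} is closed under taking subsets, so it can only cap a bundle, never force completeness; and corequisites give all-or-nothing only within a single equivalence class, while merging $S_i$ with its tokens would transitively merge adjacent vertex classes, which you correctly want to avoid. Invoking the sequential characterisation of {\sc cap}-POMs does not help either, since a sequential execution will still hand the leftover tokens to the losers.

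The paper's proof supplies exactly the amplification device your reduction is missing. It reduces from the bounded-occurrence satisfiability variant used in Theorem \ref{t_hard3} and adds one applicant $b$ of capacity $D=6n(\alpha-1)+1$ together with a single huge corequisite class $Z$ of $D$ unit-capacity courses placed \emph{after} all of $X$ in $b$'s preference list. Because $Z$ must be taken in its entirety and $b$ prefers any $X$-course to it, a POM assigns $Z$ to $b$ only when every course in $X$ is already full, which happens precisely when the formula is satisfiable; otherwise every POM has size at most $6n$, while satisfiable instances admit a POM of size $D+6n>6n\alpha$. Choosing $\alpha=n^{\lceil 2/\varepsilon\rceil}$ makes the gap exceed $N^{1-\varepsilon}$. (Your first paragraph, deriving plain NP-hardness from the {\sc cacr}--{\sc cap} correspondence and \cite[Theorem 7]{CEFMMP14}, is fine and is indeed the remark the paper makes immediately before the theorem; it is only the stronger inapproximability claim that needs the gadget above.) If you want to keep an independent-set flavour, you would still need to graft a conditional ``jackpot'' corequisite block of size $\Theta(N^{1-\varepsilon})\cdot{}$(size of the rest) that is unlocked only above the yes-threshold, at which point a gap version of an NP-complete decision problem is the cleaner starting point.
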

\begin{proof}
Let $\varepsilon>0$ be given.  Let $B$ be an instance of \sat, where $V=\{v_1,v_2,\dots,v_n\}$ is the set of variables and $C=\{c_1,c_2,\dots,c_m\}$ is the set of clauses.  Let $\beta=\left\lceil\frac{2}{\varepsilon}\right\rceil$ and let $\alpha=n^\beta$.

We form an instance $I$ of {\sc cacr} as follows.  Let $X\cup Y\cup Z$ be the set of courses, where $X=\{x_i^1,x_i^2,\bar{x}_i^1,\bar{x}_i^2 : 1\leq i\leq n\}$, $Y=\{y_i^1,y_i^2 : 1\leq i\leq n\}$ and $Z=\{z_1,z_2,\dots,z_D\}$, where $D=6n(\alpha-1)+1$.  The courses in $X$ correspond to the first and second occurrences of $v_i$ and $\bar{v}_i$ in $B$ for each $i$ ($1\leq i\leq n$).  Let $A\cup G\cup \{b,h\}$ be the set of applicants, where $A=\{a_j : 1\leq j\leq m\}$ and $G=\{g_i^1,g_i^2 : 1\leq i\leq n\}$.  Each course has capacity 1.  Each applicant in $A$ has capacity 1, each applicant in $G$ has capacity 2, $h$ has capacity $2n-m$ and $b$ has capacity $D$.

For each $i$ ($1\leq i\leq n$), courses $y_i^1$ and $y_i^2$ are corequisites.  Also all the courses in $Z$ are corequisites.  For each $j$ ($1\leq j\leq m$) and for each $s$ ($1\leq s\leq 3$), $x(c_j^s)$ is as defined in the proof of Theorem \ref{t_hard3}.  The preference lists of the applicants are as follows:
\[
\begin{array}{rll}
P(a_j): & x(c_j^1), x(c_j^2), x(c_j^3) & (1\leq j\leq m)\\
P(g_i^1): & y_i^1, y_i^2, x_i^1, x_i^2 & (1\leq i\leq n)\\
P(g_i^2): & y_i^1, y_i^2, \bar{x}_i^1, \bar{x}_i^2 & (1\leq i\leq n)\\
P(h) : & [X] \\
P(b) : & [X], [Z]
\end{array}
\]
In the preference lists of $h$ and $b$, the symbols $[X]$ and $[Z]$ denote all members of $X$ and $Z$ listed in arbitrary strict order, respectively. 
In $I$ the total capacity of the applicants, denoted by $N$, satisfies $N=D+6n$.  We claim that if $B$ has a satisfying truth assignment then $I$ has a POM of size $D+6n$, whilst if $B$ does not have a satisfying truth assignment then any POM in $I$ has size at most $6n$.

For, suppose that $f$ is a satisfying truth assignment for $B$.  We form a matching $M$ in $I$ as follows.  For each $i$ ($1\leq i\leq n)$, if $f(v_i)$={\tt true} then add the pairs $(g_i^1,y_i^1)$, $(g_i^1,y_i^2)$, $(g_i^2,\bar{x}_i^1)$, $(g_i^2,\bar{x}_i^2)$ to $M$.  On the other hand if $f(v_i)$={\tt false} then add the pairs $(g_i^1,x_i^1)$, $(g_i^1,x_i^2)$, $(g_i^2,y_i^1)$, $(g_i^2,y_i^2)$ to $M$.  For each $j$ ($1\leq j\leq m$), at least one literal in $c_j$ is true under $f$.  Let $s$ be the minimum integer such that the literal at position $s$ of $c_j$ is true under $f$.  Course $x(c_j^s)$ is still unmatched by construction; add $(a_j,x(c_j^s))$ to $M$.  There remain $2n-m$ courses in $X$ that are as yet unmatched in $M$; assign all these courses to $h$.  Finally assign all courses in $Z$ to $b$ in $M$.  It may be verified that $M$ is a POM of size $D+6n$ in $I$.

Now suppose that $f$ admits no satisfying truth assignment.  Let $M$ be any POM in $I$.  We will show that $|M|\leq 6n$.  Suppose not.  Then $|M|>6n$ and the only way this is possible is if at least one course in $Z$ is matched in $M$.  But only $b$ can be assigned members of $Z$ in $M$, and since all pairs of courses in $Z$ are corequisites, it follows that $M(b)=Z$.

We next show that, for each $i$ ($1\leq i\leq n)$, either $\{(g_i^1,y_i^1),(g_i^1,y_i^2)\}$ $\subseteq M$ or $\{(g_i^2,y_i^1),$ $(g_i^2,y_i^2)\}$ $\subseteq M$.  Suppose this is not the case for some $i$ ($1\leq i\leq n$).  As a consequence of the corequisite restrictions on courses in $Y$, $y_i^1$ and $y_i^2$ are unmatched in $M$.  Let $M'$ be the matching obtained from $M$ by deleting any assignee of $g_i^1$ worse than $y_i^2$ (if such an assignee exists) and by adding $(g_i^1,y_i^1)$ and $(g_i^1,y_i^2)$ to $M$.  Then $M'$ dominates $M$, a contradiction.

We claim that each course in $X$ is matched in $M$.  For, suppose that some course $x\in X$ is unmatched.  Then let $M'$ be the matching obtained from $M$ by unassigning $b$ from all courses in $Z$, and by assigning $b$ to $x$.  Then $M'$ dominates $M$, a contradiction.

It follows that every course in $X\cup Y$ is matched in $M$.  Since $|X\cup Y|=6n$ and the applicants in $A\cup G\cup \{h\}$ have total capacity $6n$, every applicant in $A\cup G\cup \{h\}$ is full.

Create a truth assignment $f$ in $B$ as follows.  For each $i$ ($1\leq i\leq n$), if $(g_i^1,y_i^1)\in M$, set $f(v_i)$={\tt true}, otherwise set $f(v_i)$={\tt false}.  We claim that $f$ is a satisfying truth assignment for $B$.  For, let $j$ ($1\leq j\leq m$) be given.  Then $(a_j,x(c_j^s))\in M$ for some $s$ ($1\leq s\leq 3$).  If $x(c_j^s)=x_i^r$ for some $i$ ($1\leq i\leq n$) and $r$ ($r\in \{1,2\}$) then $f(v_i)$={\tt true} by construction.  Similarly if $x(c_j^s)=\bar{x}_i^r$ for some ($1\leq i\leq n$) and $r$ ($r\in \{1,2\}$) then $f(v_i)$={\tt false} by construction.  Hence $f$ satisfies $B$, a contradiction.

Thus if $B$ is satisfiable then $I$ admits a POM of size $D+6n=6n(\alpha-1)+1+6n>6n\alpha$.  If $B$ is not satisfiable then any POM in $I$ has size at most $6n$.  Hence an $\alpha$-approximation algorithm for {\sc max pom cacr} implies a polynomial-time algorithm to determine whether $B$ is satisfiable, a contradiction to the NP-completeness of \sat.

It remains to show that $N^{1-\varepsilon}\leq \alpha$.  On the one hand, $N=6n+D=6n\alpha+1\leq 7n\alpha=7n^{\beta+1}$.  Hence $n^\beta\geq N^{\frac{\beta}{\beta+1}}7^{-\frac{\beta}{\beta+1}}$.  On the other hand, $N=6n\alpha+1\geq \alpha=n^\beta\geq 7^\beta$ as we may assume, without loss of generality, that $n\geq 7$.  It follows that $7^{-\frac{\beta}{\beta+1}}\geq N^{-\frac{1}{\beta+1}}$.  Thus
\[\alpha=n^\beta\geq N^{\frac{\beta}{\beta+1}}7^{-\frac{\beta}{\beta+1}}\geq N^{\frac{\beta}{\beta+1}}N^{-\frac{1}{\beta+1}}=N^{\frac{\beta-1}{\beta+1}}=N^{1-\frac{2}{\beta+1}}\geq N^{1-\varepsilon}.\vspace{-7mm}\]
\end{proof}

\section{Open problems and directions for future research }\label{s_open}

We would like to conclude with several open problems and directions for future research.
\begin{enumerate}
\item {\bf Refining the boundary between efficiently solvable and hard problems.} In the proofs of the NP-hardness and inapproximability results in this paper 
we had some applicants whose preference lists were not complete and/or whose capacity was not bounded by a constant. Will the hardness results still be valid if there are no unacceptable courses and all capacities are bounded? These problems also call for a more detailed multivariate complexity analysis. It might be interesting to determine whether restricting some other parameters, e.g., the lengths of preference lists, may make the problems tractable.

Other problems for which the computational complexity has not yet been resolved include the complexity of determining whether a matching is Pareto optimal, given (i) an instance of {\sc cacr}, or (ii) an instance of {\sc capr} where the pre-requisites are the same for all applicants (this is not the case in the {\sc capr} instance constructed by the proof of Theorem \ref{t_POM}).
\item
{\bf Indifferences in the preference lists.} In this paper, we assumed that all the preferences are strict.
If preference lists contain ties, sequential mechanisms have to be carefully modified to ensure Pareto optimality. Polynomial-time algorithms for finding a Pareto optimal matching in the presence of ties have been given in the context of {\sc hat} (the extension of {\sc ha} where preference lists may include ties) by Krysta et al.\ \cite{KMRZ14} and in its many-to-many generalisation {\sc cat} (the extension of {\sc ca} where preference lists may include ties) by Cechl\'arov\'a et al.\ \cite{CEFMMM16}.  However as far as we are aware, it remains open to extend these algorithms to the cases of {\sc capr} and {\sc cacr} where preference lists may include ties.
\item
{\bf Strategic issues.} 
By a standard argument, one can ensure that the sequential mechanism that lets each applicant on her turn choose her entire most-preferred bundle of courses (i.e., the serial dictatorship mechanism) is strategy-proof even in the case of prerequisites.  However, serial dictatorship may be very unfair, as the first dictator may grab all the courses and leave nothing for the rest of the applicants. Let us draw  the reader's attention to several economic papers that highlight the special position of serial dictatorship among the mechanisms for allocation of multiple indivisible goods: serial dictatorship is the only allocation rule that is Pareto efficient,  strategy-proof and fulfils some additional properties, namely non-bossiness and citizen sovereignty \cite{Pap01}, and population monotonicity or consistency \cite{KM01}. We were not able to obtain a similar characterization of serial dictatorship for {\sc capr}. 
 
As far as the general sequential mechanism is concerned, a recent result by Hosseini and Larson \cite{HL15} shows that no sequential mechanism that allows interleaving policies (i.e., in which an $a_i$ is allowed to pick courses more than once, and between two turns of $a_i$ another applicant has the right to pick a course) is strategy-proof, even in the simpler case without any prerequisites.  It immediately follows that {\sf SM-CAPR} is not strategy-proof.  However, it is not known whether a successful manipulation can be computed efficiently.  Further, we have shown that in {\sc capr}, not all POMs can be obtained by a sequential mechanism.  We leave it as an open question whether a strategy-proof and Pareto optimal mechanism other than serial dictatorship exists in {\sc capr}.

\end{enumerate}

\end{document}